    \newwrite\bibnotes
    \def\bibnotesext{Notes.bib}
\write\bibnotes{@CONTROL{REVTEX41Control}}
\write\bibnotes{@CONTROL{%
    apsrev41Control,author="08",editor="1",pages="1",title="0",year="0"}}
\write\@auxout{\string\citation{apsrev41Control}}%
\newcommand{\tr}{\mathrm{tr}}
\newcommand{\innerp}[2]{\langle #1,#2\rangle}
\newcommand{\ot}{\otimes}
\newcommand{\cE}{\mathcal{E}}
\newcommand{\cS}{\mathcal{S}}
\newcommand{\cV}{\mathcal{V}}
\newcommand{\R}{\mathrm{R}}
\newcommand{\Loc}{\mathrm{L}}
\newcommand{\ins}{n_I}
\newcommand{\outs}{n_O}
\newcommand{\parties}{N}
\newcommand{\ns}{\mathrm{NS}}
\newcommand{\verts}{W}
\newcommand{\vertx}{e}
\newcommand{\edges}{M}
\newcommand{\edge}{m}
\newcounter{alg}
\newtheorem{theorem}{Theorem}
\newtheorem{lemma}[theorem]{Lemma}
\theoremstyle{definition}
\definecolor{mylinkcolor}{rgb}{0,0,0.8}
\definecolor{citecol}{rgb}{0,0.6,0}
\begin{document}

\title{Multi-system measurements in generalized probabilistic theories and their role in information processing}

\author{Giorgos Eftaxias} 
\email{giorgos.eftaxias@bristol.ac.uk}
\affiliation{Quantum Engineering Centre for Doctoral Training,
University of Bristol, Bristol BS8 1FD, United Kingdom}
\affiliation{Department of Mathematics, University of York, York YO10 5DD, UK}
\author{Mirjam Weilenmann}
\email{mirjam.weilenmann@unige.ch}
\affiliation{Institute for Quantum Optics and Quantum Information (IQOQI), Austrian Academy of Sciences, Boltzmanngasse 3, 1090 Vienna, Austria}
\affiliation{Département de Physique Appliquée, Université de Genève, Genève, Switzerland}
\author{Roger Colbeck}
\email{roger.colbeck@york.ac.uk}
\affiliation{Department of Mathematics, University of York, York YO10 5DD, UK}

\date{$12^{\text{th}}$ January 2024}

\begin{abstract}
Generalized probabilistic theories (GPTs) provide a framework in which a range of possible theories can be examined, including classical theory, quantum theory and those beyond.  In general, enlarging the state space of a GPT leads to fewer possible measurements because the additional states give stronger constraints on the set of effects, the constituents of measurements. This can have implications for information processing. In boxworld, for example, a GPT in which any no-signalling distribution can be realised, there is no analogue of a measurement in the Bell basis and hence the analogue of entanglement swapping is impossible. A comprehensive study of measurements on multiple systems in boxworld has been lacking. Here we consider such measurements in detail, distinguishing those that can be performed by interacting with individual systems sequentially (termed wirings), and the more interesting set of those that cannot.  We compute all the possible boxworld effects for cases with small numbers of inputs, outputs and parties, identifying those that are wirings. The large state space of boxworld leads to a small effect space and hence the effects of boxworld are widely applicable in GPTs. We also show some possible uses of non-wirings for information processing by studying state discrimination, nonlocality distillation and the boxworld analogue of nonlocality without entanglement.  Finally, we connect our results to the study of logically consistent classical processes and to the composition of contextuality scenarios. By enhancing understanding of measurements in boxworld, our results could be useful in studies of possible underlying principles on which quantum theory can be based.
\end{abstract}

\maketitle

\section{Introduction}
Standard textbook presentations of the postulates of quantum mechanics usually begin with a list of mathematical axioms with relatively little accompanying explanation. This is in contrast to relativity theory, for example, which can be based on the premise that the laws of physics are frame independent. Whether or how we can formulate quantum theory in a similar way remains open in spite of significant investigation (see, e.g.,~\cite{PR,Hardy,fuchs2002quantum,MM,Brukner2011,CDP}).
Quantum theory has some counter-intuitive features such as the presence of non-local correlations that seemingly defy classical explanation~\cite{Bell}. A general framework to study these features in the context of quantum theory and possible alternatives is that of generalized probabilistic theories (GPTs)~\cite{barrett}.  Beyond classical and quantum theory one well studied GPT is boxworld, which allows arbitrary no-signalling distributions to be realised. It is known for instance that particular cryptographic tasks remain possible even against adversaries that have access to boxworld systems~\cite{BHK,bcktwo,CR_free,GMTDAA}. In a further line of work that tries to single out quantum correlations within a range of alternative theories (see e.g.~\cite{van2013implausible,navascues2010glance,PPKSWZ,FSABCLA}), boxworld is a useful foil theory. In spite of these results, the structure of multi-system measurements in boxworld has not been developed in detail, but understanding these is important to fully characterize the information processing power of boxworld. Furthermore, although we refer to effects in boxworld throughout this paper, the measurements we find are applicable to a range of GPTs (see Section~\ref{sec:other_gpts}).

Further motivation for the study of multi-system measurements comes from the recent trend for studying information processing in quantum networks~\cite{Branciard2010,Fritz12,CSSH2020,PGT2022,SBCB2022}. It is likely that simple quantum networks will be built in the near future, hence it is useful to explore the possibilities these networks bring. GPTs provide a useful means to form a more general understanding of this. To compare what is possible in quantum theory as opposed to other GPTs it is furthermore necessary to understand the structure of multi-system measurements in the latter. Such comparison furthermore informs the question of in what sense quantum theory is optimal for information processing~\cite{WC4}. In the present work, we thus explore such multi-system measurements.  In addition, we connect our results to the study of logically consistent classical processes~\cite{BaumelerWolf} and of composition of contextuality scenarios~\cite{Acin_Contextuality,Wolfe_Sainz}, pointing to a new kind of composition in the latter case.

Our study of the set of possible measurements in boxworld proceeds using the set of effects, the constituent parts of measurements (see later).  The large set of possible states in boxworld comes at the expense of having a smaller set of effects than in theories with weaker correlations. In this work we will be interested in the set of possible measurements that can be performed with access to several systems in boxworld. One type of such measurements are the \emph{wirings}~\cite{barrett}, which correspond to processes in which a measurement is applied to one system, then to a second depending on the result and so on (or convex combinations of such processes). Wirings are hence implementable using local operations and classical communication (LOCC) when the individual systems are separated\footnote{More precisely, wirings only require one-way LOCC.}. It is well-known that in quantum theory not all measurements are of this form. For example, a measurement in the Bell basis as used in teleportation cannot be realised by wiring together measurements on individual systems. For single and bipartite systems in boxworld, all measurements are wirings~\cite{barrett}, while this is no longer true for three or more systems~\cite{ShortBarrett}. One of the aims of the present paper is to find the multi-system measurements that are not wirings and investigate their significance for information processing. 

We classify scenarios by the number of systems, and the number of inputs and outputs for the boxes of each system. One way to find the set of all extremal effects in a given scenario is using vertex enumeration (all the extremal no-signalling distributions act as facets since the inner product of each such state with any valid effect must be greater than $0$ and less than $1$). However, directly performing vertex enumeration is slow, except in the smallest scenario~\cite{SPG}. To circumvent this we exploit an alternative method for finding extremal effects that starts with different ways to represent the identity effect.  
By breaking down these identity effects we can find the extremal effects in various scenarios. We separately consider deterministic and non-deterministic effects and classify the effects into wirings and non-wirings before investigating the significance of the latter for state discrimination and nonlocality distillation, showing the advantages of non-wiring measurements.  We also discuss how our findings relate to the phenomenon of quantum nonlocality without entanglement~\cite{BDFMRSSW}, which is the existence of measurements comprising product effects that cannot be performed by LOCC. The non-wiring type measurements of the present paper serve as a boxworld analogue of these, and we find a set of product states in boxworld that can be perfectly distinguished with a non-wiring measurement, but not with any wiring.

Understanding measurements in GPTs is also useful to gain insight into which features of quantum theory make it special with respect to other theories, which in turn may help find underlying principles on which quantum theory can be based. Boxworld is known to have only separable measurements~\cite{ShortBarrett} (i.e., those for which all effects  can be expressed as a sum of product effects), which in a sense makes it inferior to other GPTs. However, this property also makes boxworld a suitable example for studying the difference between separable measurements that can be done using LOCC and those that cannot, and of the information processing capabilities enabled by the latter, an understanding that will likely carry over to arbitrary GPTs.  Further insights into this difference are also desirable in the quantum case~\cite{MancinskaLocc,CheflesLocc,ChitambarMonotones,childsFramework}.

The rest of the paper proceeds as follows. We first give a short introduction to GPTs, before introducing the notation and technical background in Section~\ref{sec:pre}. In Section~\ref{sec:extremal}, we present technical results underlying the algorithms for generating all extremal effects in box-world; these algorithms are presented in Section~\ref{sec:algorithm}. In Section~\ref{sec:examples} we apply our algorithm to various scenarios of up to 4-inputs and up to 4-outputs and demonstrate the application of these results in information processing tasks in Section~\ref{sec:application}. Finally, we connect our work to logically consistent classical processes and to composition of contextuality scenarios in Section~\ref{sec:beyond_GPT} before making a few concluding remarks in Section~\ref{sec:conclusion}.

\section{Background}
We briefly outline the framework for GPTs that will be used throughout this paper.  In a GPT, states are represented as vectors in a vector space $V$.  We use $\cS\subseteq V$ to denote the set of all possible states (note that the state space depends on the size of the system), and we typically assume $\mathcal{S}$ is convex and compact. An \emph{effect} is a linear map from a state to a probability. Effects can be taken as vectors in the vector space dual to $V$, which we call $\cE$, and the map from states to probabilities is then formed by taking the inner product between the state and the effect. Given a state space, a valid effect $e$ must satisfy $0\leq\innerp{e}{s}\leq1$ for all $s\in\cS$. Under the \emph{no restriction hypothesis}~\cite{chiribella2010probabilistic}, which we will assume here, this necessary condition is taken to be necessary and sufficient for an effect to be valid.

The state space of a composite system is formed by taking some kind of tensor product between the individual state spaces. A minimal requirement is that if $s_A\in\cS_A$ is a state of system $A$ and $s_B\in\cS_B$ is a state of system $B$, then $s_A\ot s_B$ is a state of $AB$. If all joint states have this form, or are convex combinations of states of this form, then the joint state space is said to be the minimal tensor product of the individual state spaces.  Taking the composite system effect space to obey the no restriction hypothesis, the min tensor product state space corresponds to forming the effect space by taking the maximal tensor product of the individual effect spaces.

One way to specify a state is via the probabilities of the outcomes of a set of \emph{fiducial measurements}~\cite{Hardy}, i.e., a set that is sufficient to completely characterize the state.  We will work with systems that have a finite set of fiducial measurements, each of which has a finite number of outcomes.  This means that a state can be specified using a vector whose entries contain the probabilities of outcomes for the fiducial measurements. For simplicity, we consider cases in which the number of outcomes of each fiducial measurement is the same.  For instance, if for a single system there are two 2-outcome fiducial measurements, we write the state as
\begin{equation}\left(P(0|0),P(1|0)\mid P(0|1),P(1|1)\right), \label{eq:notation}\end{equation}
where $P(i|j)$ is the probability of output $i$ given input $j$.

Although this is a four dimensional vector, due to normalization there are only two independent parameters. For later convenience we stick with the larger representation rather than suppressing the redundant parameters.

In this work we assume \emph{local tomography}~\cite{hardy_2011}, i.e., that the state of a joint system can be determined from the statistics of local fiducial measurements (called the global state assumption in~\cite{barrett}).  This means that a state of $\parties$ parties (where for each single system there are two 2-outcome fiducial measurements) has a similar representation:
\begin{align}\label{eq:state_rep}
  \left(P(0\ldots0|0\ldots0),\ldots,P(1\ldots1|0\ldots0)\mid\ldots\mid P(0\ldots0|1\ldots1),\ldots,P(1\ldots1|1\ldots1)\right)
\end{align}
(the ordering is such that first for input $00\ldots0$ the outcomes increase counting in binary, and then the inputs increase counting in binary).

For every system there is an identity effect, i.e., an effect $u$ such that $\innerp{u}{s}=1$ for all $s\in\cS$. In our notation there are several ways to write this effect.  For a single system we can write $u^{\R}=(1,1|0,0)$, $u^{\R}=(0,0|1,1)$, $u^{\R}=(1,1|1,1)/2$ etc. Although written differently, these all represent the same effect, the notation $u^{\R}$ meaning a representation of the effect $u$. In general, we identify two vectors as representing the same effect if they have the same inner product with all elements of the state space.

The set of vectors that can be added to any effect vector without changing the effect it represents we term \emph{no-signalling moves} because they each arise as a result of the state space $\cS$ only containing no-signalling distributions (or as a result of the normalization). For instance, for all valid states, $P(00|00)+P(10|00)-P(00|10)-P(10|10)=0$, which represents the impossibility of Alice's choice of measurement affecting Bob's outcome when Bob makes input $0$.  This no-signalling condition can be encoded using a vector $r$ such that $\innerp{r}{s}=0$ for all $s\in\cS$. 
Thus, if $e^{\R}$ is a vector representing a valid effect, then $e^{\R}+r$ represents the same effect. We use $\{r_i\}$ to denote a complete set of generators of all such vectors, so that any no-signalling move can be written as a linear combination of vectors from $\{r_i\}$.

A \emph{measurement} is a collection of effects that sum to the identity effect, i.e., given state space $\cS$, the set of effects $\{e_1,\ldots,e_m\}$ form a measurement on $\cS$ if for each $i=1,\ldots,m$ we have $0\leq\innerp{e_i}{s}\leq1$ for all $s\in\cS$ and if $\sum_{i=1}^me_i=u$.

In this work we will be interested in the effect space when the state space comprises all no-signalling distributions (also known as \emph{boxworld}). Because boxworld has large multipartite state spaces, the set of possible effects is comparably restricted. Boxworld effects are therefore valid in a wide range of GPTs. We consider the effect spaces for various small numbers of inputs, outputs and parties (larger cases become too computationally intensive).  Due to linearity, the effect space is convex, and can hence be characterized in terms of a set of extremal effects. There are a finite number of these, generating a convex polytope; our aim is to have a procedure that can generate the vertices of this polytope.

In the two-party case, the complete set of extremal effects was computed in~\cite{SPG}. It was found that there are 82 such effects. These were computed by vertex enumeration starting from the facet description of the effect space.  This facet description says that to be a valid effect $0\leq\innerp{e}{s_i}\leq1$ for all extremal states $s_i$ (in the present case there are 24 extremal states: 16 local deterministic states and 8 PR-box-type states). In this work we use a different method that allows us to treat cases for which our computational tools for vertex enumeration are prohibitively slow.

We will also be interested in a special type of effect called a \emph{wiring}. Wirings are effects that can be implemented by convex combinations of procedures of the following form: choose one of the states and choose a measurement to make on that state, take its output and apply a function to it to choose the next state and the measurement on the next state, and so on, where at each step all the previous outputs are used as arguments of the function that selects the next state and the measurement performed on it.  The final output is then formed by taking a function of all the individual measurement outcomes --- see Fig.~\ref{fig_wir_effect_example} for an illustration.  
\begin{figure}[h]
    \centering
    \includegraphics[scale=0.55]{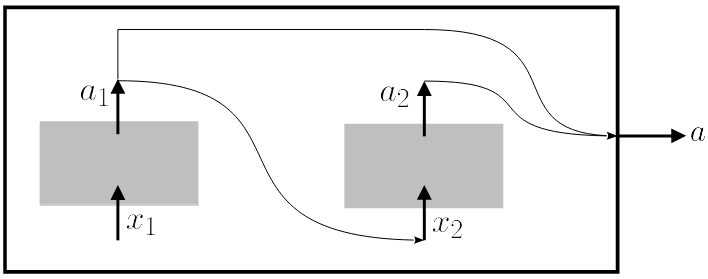}
    \caption{Illustration of a wiring-effect in the bipartite case. An input $x_1$ is made generating outcome $a_1$, then $x_2$ is taken as a function of $a_1$ giving outcome $a_2$. The final output, $a$, is then a function of $a_1$ and $a_2$. For instance, if $x_1=0$, $x_2=a_1$ and $a=a_1\oplus a_2$ then the effect corresponding to $a=0$ is ${e_{a=0}=(1,0,0,0|0,0,0,1|0,0,0,0|0,0,0,0)}$ and the effect corresponding to $a=1$ is ${e_{a=1}=(0,1,0,0|0,0,1,0|0,0,0,0|0,0,0,0)}$.} 
    \label{fig_wir_effect_example}
\end{figure}
The extremal wirings are effects of this type that cannot be represented as convex combinations of others. These boxworld measurements are of particular interest, since they correspond to classical processing of inputs and outcomes and are thus applicable in any operational theory allowing classical processing. In particular, they are applicable irrespective of the type of system under consideration and can be used for instance for non-locality distillation, which in turn may have applications for device-independent information processing. This is discussed further in Section~\ref{sec:NLdist}.

\section{Preliminaries}\label{sec:pre}
Let $\ins$ be the number of inputs and $\outs$ be the number of outputs per party with $\parties$ parties. A state can then be represented as a $(\ins\outs)^\parties$-dimensional vector, whose entries are the conditional probabilities of every set of outputs given every set of inputs. Given a subset, $V$, of the set $[\parties]:=\{0,1,2,\ldots,\parties-1\}$ of parties, we use $X_V$ as the random variable representing their inputs (elements of $[\ins]^{|V|}$) and $A_V$ as that for their outputs (elements of $[\outs]^{|V|}$). 

We will now define our state space $\cS_{\ns}$. To be elements of this we require three conditions to hold: no-signalling, positivity and normalization, which we detail below. The no-signalling condition is that no subset of parties should be able to signal to any other subset of parties. In other words, for any disjoint subsets $V$ and $W$ of $[\parties]$, the parties in $V$ cannot signal to those in $W$. Mathematically, this means
\begin{align}\label{eq:NS}
&\sum_{{\bf a}_V}P_{A_WA_V|X_W={\bf x}_W,X_V={\bf 0}}({\bf a}_W,{\bf a}_V)=\sum_{{\bf a}_V}P_{A_WA_V|X_W={\bf x}_W,X_V={\bf x}'_V}({\bf a}_W,{\bf a}_V)\\ \text{for all}\ \  &{\bf a}_W\in\{0,1,2,\ldots,\outs-1\}^{|W|},\ {\bf x}_W\in\{0,1,2,\ldots,\ins-1\}^{|W|}\ \text{and}\ {\bf x}'_V\in\{0,1,2,\ldots,\ins-1\}^{|V|}\,.\nonumber
\end{align}
A smaller subset of this set of conditions is sufficient to generate the whole set, namely it suffices that no party can signal to the collection of all the others. This is stated formally in the following lemma.
\begin{lemma}\label{lem:NSmoves}
  The set of conditions implied by~\eqref{eq:NS} for all disjoint subsets $V$ and $W$ of $[\parties]$ is implied by demanding that~\eqref{eq:NS} holds for the subset of conditions where $V$ is a singleton and $W=[\parties]\setminus V$.
\end{lemma}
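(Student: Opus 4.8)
The plan is to establish the only nontrivial implication, namely that the singleton family (the instances of~\eqref{eq:NS} with $V=\{i\}$ and $W=[\parties]\setminus\{i\}$) forces every instance of~\eqref{eq:NS}; the converse is immediate, since the singleton family is literally a sub-family of the full set of conditions. The strategy has two ingredients. First I would upgrade each singleton condition ``party $i$ cannot signal to its complement $[\parties]\setminus\{i\}$'' to the statement ``party $i$ cannot signal to any subset $W$ not containing $i$.'' Second, I would flip the inputs of the parties in an arbitrary source group $V$ one party at a time, arranging that each step is governed by a single-party condition.

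For the first ingredient, fix a party $i$ and a distribution $P$ obeying all singleton conditions. The condition~\eqref{eq:NS} for $V=\{i\}$, $W=[\parties]\setminus\{i\}$ says that the marginal on the outputs $A_{[\parties]\setminus\{i\}}$, obtained by summing $P$ over $a_i$, is independent of the input $x_i$ (with all other inputs held fixed). Because marginalizing a distribution that is independent of a parameter over still more variables preserves that independence, summing further over the outputs of the parties in $([\parties]\setminus\{i\})\setminus W$ shows that the marginal on $A_W$ is independent of $x_i$ for every $W\subseteq[\parties]\setminus\{i\}$.

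For the second ingredient, take disjoint $V,W\subseteq[\parties]$, write $V=\{i_1,\dots,i_k\}$, and compare the inputs ${\bf 0}$ and ${\bf x}'_V$ on $V$ through the interpolating sequence ${\bf 0}={\bf v}_0,{\bf v}_1,\dots,{\bf v}_k={\bf x}'_V$, where ${\bf v}_j$ and ${\bf v}_{j-1}$ differ only in the input of party $i_j$. At the $j$-th step only $x_{i_j}$ changes; since $i_j\in V$ and $V\cap W=\emptyset$ we have $W\subseteq[\parties]\setminus\{i_j\}$, so the first ingredient gives that the $A_W$-marginal is unchanged across that step. Chaining the $k$ resulting equalities, all referring to the same fixed $P$ with the inputs of $W$ (and of the remaining parties) held fixed throughout, yields that the $A_W$-marginal coincides at $X_V={\bf 0}$ and $X_V={\bf x}'_V$, which is exactly~\eqref{eq:NS} for this $V,W$.

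The part needing the most care is the bookkeeping around the parties $U=[\parties]\setminus(V\cup W)$ that lie in neither group: their inputs are simply held fixed along the interpolation, and their outputs are among those summed out when passing from the full complement marginal to the $A_W$-marginal, so they never interfere. One should also verify that the single-party conditions genuinely control the $A_W$-marginal at each hybrid step, rather than only the full complement marginal --- this is precisely what the first ingredient supplies --- and check that the direction of the implication is the intended one, reading off independence of $X_V$ after summing over $A_V$. None of these points is deep, but getting the hybrid argument and the marginalizations aligned with the precise indexing of~\eqref{eq:NS} is where the proof earns its keep.
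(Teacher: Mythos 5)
Your proof is correct. There is, in fact, nothing in the paper to compare it against: Lemma~\ref{lem:NSmoves} is stated without proof, the result being treated as standard, so your argument fills a genuine gap rather than duplicating or diverging from one. The two ingredients you use are exactly the right ones: (i) marginalizing the singleton condition shows that party $i$ cannot signal to any $W\subseteq[\parties]\setminus\{i\}$ with all other inputs held fixed, and (ii) the hybrid chaining that changes the inputs of $V=\{i_1,\dots,i_k\}$ one party at a time invokes (i) legitimately at each step, since $i_j\in V$ and $V\cap W=\emptyset$ guarantee $W\subseteq[\parties]\setminus\{i_j\}$, and equality of the $A_W$-marginals propagates along the chain. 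One point deserves to be made explicit rather than left to bookkeeping: when $V\cup W\neq[\parties]$, the object $P_{A_WA_V|X_W,X_V}$ appearing in~\eqref{eq:NS} is itself a marginal over the parties $U=[\parties]\setminus(V\cup W)$, and for that notation to be meaningful one must know that this marginal is independent of ${\bf x}_U$. Your argument supplies this as well --- apply the identical chaining with $U$ as the source group and $V\cup W$ as the target --- but as written you prove only independence of ${\bf x}_V$ for each fixed ${\bf x}_U$; a closing sentence observing that the same mechanism removes the ${\bf x}_U$-dependence would match the lemma exactly as phrased.
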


The conditions where $V$ is a singleton and $W=[\parties]\setminus V$ can be represented by a set of vectors $\{r_i\}$, such that $\innerp{r_i}{s^{\R}}=0$ for all $s\in\cS_{\ns}$, and where each $r_i$ contains $\outs$ elements with value $1$, $\outs$ elements with value $-1$ and the remaining entries are $0$.  The number of such conditions is $\parties(\outs\ins)^{\parties-1}(\ins-1)$. These are not all linearly independent: the dimension of the space of un-normalized no-signalling distributions is $(\ins(\outs-1)+1)^\parties$~\cite{Cirelson93,RBG}.

A valid state must also have positive entries, since each entry represents a probability, i.e.,
\begin{equation}\label{eq:pos}
P_{A_S|X_S}({\bf a}|{\bf x})\geq0\ \ \text{for all}\ \ {\bf a},{\bf x}\,,
\end{equation}
where $S=[\parties]$.

The remaining condition for a vector to be an element of $\cS_{\ns}$ is that
\begin{equation}\label{eq:norm}
\sum_{a_S}P_{A_S|X_S={\bf 0}}({\bf a}_S)=1\,,
\end{equation}
where $S=[\parties]$.  Note that, given the no-signalling conditions, if~\eqref{eq:norm} holds for $X_S={\bf 0}:=0\ldots0$ then it also holds for $X_S={\bf x}_S$ for any other ${\bf x}_S\in[\outs]^{|S|}$. Furthermore, together with~\eqref{eq:pos} this implies that no entry can exceed $1$ (so conditions that each probability is at most $1$ do not need to be added).

An important set of states are the \emph{local deterministic states}.  These correspond to deterministically assigning one of the outcomes for every possible input of every party. Thus, they are states of the form
$$P_{A|X={\bf x}}({\bf a})=P_{A_1|X_1=x_1}(a_1)P_{A_2|X_2=x_2}(a_2)\ldots P_{A_\parties|X_\parties=x_\parties}(a_\parties)\,,$$
and where $P_{A_i|X_i}\in\{0,1\}$ in all cases.  We denote these $s^{\Loc,i}$, where $i$ runs from $1$ to $(\ins\outs)^\parties$. The linear span of the set of local deterministic states is the same as that of the no-signalling distributions (being local does not confer any additional \emph{equality} constraints over being no-signalling). Thus, given two vectors representing effects, we can check whether they represent the same effect by checking whether both vectors have the same inner product with all local deterministic distributions\footnote{In general, a subset of these would also suffice.}.

An important symmetry class for a given state or effect is that of the relabellings.  We can relabel the parties, the inputs for each party, or the outputs for each input of each party.  Thus, the total size of the symmetry group of relabellings is $((\outs!)^{\ins}\ins!)^\parties\parties!$.

\section{Extremal effects} \label{sec:extremal}
We are interested in finding the extremal effects for a given scenario, i.e., for a given $(\parties,\ins,\outs)$. As mentioned before we use a representation that has some redundancy in that the constraints arising from the no-signalling moves and normalisation are not used to reduce the parameters. (Such a reduction would typically be done when trying to solve this problem directly by means of vertex enumeration.) Instead, the technical results of this section show that the redundant representation employed has a convenient structure that allows us to find extremal effects in a different way, which will be the basis of the algorithms we present in Section~\ref{sec:algorithm}.
We start with the following observation, expressed in terms of a \emph{standard basis vector} (SBV), by which we mean a vector with all components 0 except for one 1.
\begin{lemma}\label{lem:extremal1}
Every $(\ins\outs)^\parties$-dimensional SBV is a representation of an extremal effect.
\end{lemma}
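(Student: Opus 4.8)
The plan is to prove the two things packed into the statement: that a standard basis vector (SBV) $e_k$ represents a \emph{valid} effect, and that the effect it represents is \emph{extremal}. The first is immediate. A given SBV singles out one coordinate of the state representation~\eqref{eq:state_rep}, so that $\innerp{e_k}{s^{\R}}$ equals a single conditional probability $P_{A_S|X_S}({\bf a}|{\bf x})$ for some fixed ${\bf a},{\bf x}$. By~\eqref{eq:pos} together with the remark following~\eqref{eq:norm}, every entry of a valid state lies in $[0,1]$, so $0\le\innerp{e_k}{s^{\R}}\le1$ for all $s\in\cS_{\ns}$; under the no-restriction hypothesis this is exactly the condition for $e_k$ to be a valid effect.

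For extremality I would argue at the level of the effect rather than of the representation, since many vectors represent the same effect. The key tool is the fact recorded in the Preliminaries that an effect is completely determined by its inner products with the local deterministic states $s^{\Loc,i}$, because their linear span coincides with that of all no-signalling distributions. First I would note the values $e_k$ takes on these states: since each $s^{\Loc,i}$ assigns a deterministic ($0$ or $1$) outcome to every input, the coordinate picked out by $e_k$ satisfies $\innerp{e_k}{s^{\Loc,i}}\in\{0,1\}$ for every $i$. Thus, as a function on the local deterministic states, the effect takes only the extreme values $0$ and $1$.

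Now suppose the effect represented by $e_k$ could be written as a convex combination $\lambda f+(1-\lambda)g$ of valid effects with $0<\lambda<1$. Evaluating on any $s^{\Loc,i}$ gives $\lambda\innerp{f}{s^{\Loc,i}}+(1-\lambda)\innerp{g}{s^{\Loc,i}}=\innerp{e_k}{s^{\Loc,i}}\in\{0,1\}$. Since $f$ and $g$ are valid, $\innerp{f}{s^{\Loc,i}}$ and $\innerp{g}{s^{\Loc,i}}$ both lie in $[0,1]$, and a strictly convex combination of two numbers in $[0,1]$ can equal $0$ (resp.\ $1$) only if both numbers equal $0$ (resp.\ $1$). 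Hence $f$ and $g$ agree with $e_k$ on every local deterministic state, so by the determining property they represent the same effect as $e_k$, proving extremality.

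Geometrically this is just the observation that the effect space, viewed through its values on the $s^{\Loc,i}$, embeds as a convex subset of a hypercube $[0,1]^{L}$ (with $L$ the number of local deterministic states), and that $e_k$ maps to a $0/1$ vertex of that cube; a cube vertex lying in a convex subset is automatically extreme in the subset. The lemma itself is not hard, and the only point requiring care --- the nearest thing to an obstacle --- is to keep the reasoning at the level of effects and not of their (non-unique) representations, which is handled cleanly by invoking the spanning property of the local deterministic states rather than manipulating the redundant vectors directly.
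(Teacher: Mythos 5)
Your proof is correct and follows essentially the same route as the paper's: validity is immediate because an SBV picks out a single probability, and extremality is obtained by noting that the effect takes values in $\{0,1\}$ on every local deterministic state, that a strict convex combination of numbers in $[0,1]$ equalling $0$ or $1$ forces both terms to equal $0$ or $1$, and that agreement on the local deterministic states (whose span covers the no-signalling distributions) implies equality of effects. The only cosmetic difference is that the paper phrases the key step via arbitrary states $s_0,s_1$ achieving the values $0$ and $1$ before specializing to the local deterministic ones, whereas you work with the local deterministic states directly; the substance is identical.
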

\begin{proof}
  To see that SBVs all represent effects, note that the inner product of such a vector $e^{\R}$ with a state $s\in\cS$ gives a single probability, hence $0\leq\innerp{e^{\R}}{s^{\R}}\leq1$ for all $s\in\cS$, making $e^{\R}$ a representation of a valid effect.  By considering a state satisfying $\innerp{e}{s}=1$, it is clear that $\alpha e$ is not an effect for any $\alpha>1$, hence $e$ is on the boundary of the effect space. To see that it is extremal, suppose $e=\alpha e_1+(1-\alpha)e_2$ for two effects $e_1$ and $e_2$ and $0<\alpha<1$. Then
  $$\innerp{e}{s}=\alpha\innerp{e_1}{s}+(1-\alpha)\innerp{e_2}{s}\,.$$
  Now suppose $s_1$ is a state with $\innerp{e}{s_1}=1$. Since $e_1$ and $e_2$ are effects, it follows that $\innerp{e_1}{s_1}=\innerp{e_2}{s_1}=1$. Similarly, if $s_0$ is a state with $\innerp{e}{s_0}=0$, it follows that $\innerp{e_1}{s_0}=\innerp{e_2}{s_0}=0$. Because the local deterministic distributions are $\{0,1\}$-valued, if $s$ is local deterministic it satisfies $\innerp{e}{s}\in\{0,1\}$. It follows that both $e_1$ and $e_2$ have the same action on any local deterministic distribution as $e$ does.  Since the set of local deterministic distributions span the space of no-signalling distributions, $e_1$ and $e_2$ must have the same action as $e$ for any $s\in\cS_\ns$. Thus, $e_1$, $e_2$ and $e$ must be the same effect and $e$ is extremal.
\end{proof}

The $\{0,1\}$-valued representations of the identity effect will be important because these are the class that can be used to generate all wirings (as well as some non-wirings).

\begin{lemma}\label{lem:eff}
Let $u^{\R}$ be a $\{0,1\}$-valued vector representing the identity effect. Any vector $e^{\R}$ formed from $u^{\R}$ by replacing any number of the $1$ entries with $0$s is a representation of an extremal effect.
\end{lemma}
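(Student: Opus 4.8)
The plan is to reduce the statement to the extremality argument already established in Lemma~\ref{lem:extremal1}. That proof really shows that any effect whose inner product with every local deterministic state lies in $\{0,1\}$ is extremal, the spanning of the no-signalling space by local deterministic distributions doing the rest. So my goal is twofold: first to confirm that $e^{\R}$ represents a valid effect at all, and second to verify this $\{0,1\}$-dichotomy on local deterministic states for the specific $e^{\R}$ obtained by deleting $1$s from $u^{\R}$.

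For validity I would note that $e^{\R}$ has support contained in that of $u^{\R}$, and both are $\{0,1\}$-valued, so $0\le e^{\R}_k\le u^{\R}_k$ for every component $k$. Since every state $s$ has nonnegative entries by positivity~\eqref{eq:pos}, it follows that $\innerp{u^{\R}-e^{\R}}{s^{\R}}=\sum_k (u^{\R}_k-e^{\R}_k)s^{\R}_k\ge 0$ and likewise $\innerp{e^{\R}}{s^{\R}}\ge 0$. Combined with $\innerp{u^{\R}}{s}=1$ this gives $0\le\innerp{e^{\R}}{s}\le 1$ for all $s\in\cS$, so $e^{\R}$ is a representation of a valid effect. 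As a useful aside, the same computation shows that the complementary vector $u^{\R}-e^{\R}$, obtained by keeping exactly the deleted $1$s, is also a valid effect of the same form, and $\{e^{\R},u^{\R}-e^{\R}\}$ is a two-outcome measurement.

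The crux is the second part. Fix a local deterministic state $s^{\Loc}$. Its representation is $\{0,1\}$-valued, and $e^{\R}$ is $\{0,1\}$-valued, so $\innerp{e^{\R}}{s^{\Loc}}$ is a sum of products of $\{0,1\}$ entries, hence a nonnegative integer. On the other hand, componentwise domination together with positivity again gives $\innerp{e^{\R}}{s^{\Loc}}\le\innerp{u^{\R}}{s^{\Loc}}=1$, the last equality holding because $u^{\R}$ represents the identity. A nonnegative integer that is at most $1$ lies in $\{0,1\}$, which establishes the dichotomy.

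With the dichotomy in hand I would finish exactly as in Lemma~\ref{lem:extremal1}: suppose $e=\alpha e_1+(1-\alpha)e_2$ for effects $e_1,e_2$ and $0<\alpha<1$. For each local deterministic state $s^{\Loc}$ we have $\innerp{e}{s^{\Loc}}\in\{0,1\}$, and since $\innerp{e_1}{s^{\Loc}}$ and $\innerp{e_2}{s^{\Loc}}$ both lie in $[0,1]$ while averaging to this extreme value, each must equal it. Thus $e_1$, $e_2$ and $e$ agree on all local deterministic states, and as these span the no-signalling space they agree on all of $\cS_{\ns}$, so $e_1=e_2=e$ and $e$ is extremal. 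The only genuine content beyond Lemma~\ref{lem:extremal1} is the $\{0,1\}$-dichotomy of the previous paragraph; I expect the one point needing care to be using the componentwise bound $e^{\R}\le u^{\R}$ correctly in tandem with the nonnegativity of states, rather than any deeper obstacle.
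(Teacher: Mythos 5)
Your proposal is correct and follows essentially the same route as the paper's proof: establish validity via componentwise domination $e^{\R}\le u^{\R}$ together with nonnegativity of state entries, then deduce that $\innerp{e}{s}\in\{0,1\}$ on local deterministic states and invoke the extremality argument of Lemma~\ref{lem:extremal1}. If anything, your treatment of the dichotomy step (nonnegative integer bounded above by $1$) is slightly more explicit than the paper's, which asserts the $\{0,1\}$-valuedness of the inner product directly.
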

\begin{proof}
  First, since $u^{\R}$ represents the identity effect we have $\innerp{u^{\R}}{s^{\R}}=1$ for all $s\in\cS$. Since $u^{\R}$ is a $\{0,1\}$-valued vector we can write it in terms of the SBV effects $\{e_i\}$ as $u=\sum_i\lambda_ie_i$, where $\lambda_i\in\{0,1\}$. We can also write $e^{\R}=\sum_i\lambda'_ie^{\R}_i$, where $\lambda'_i\in\{0,1\}$ and $\{i:\lambda'_i=1\}\subset\{i:\lambda_i=1\}$.  It follows that $0\leq\innerp{e}{s}\leq\innerp{u}{s}=1$ for all $s\in\cS$, so $e$ is a valid effect.

  To see that it is extremal, consider writing $e=\alpha e_1+(1-\alpha)e_2$ for two effects $e_1$ and $e_2$ and $0<\alpha<1$. Since $e^{\R}$ is a $\{0,1\}$-valued representation of $e$, and all local deterministic states have a $\{0,1\}$-valued representation, for a local deterministic state $s$ we have $\innerp{e}{s}\in\{0,1\}$. Hence, by the same argument as in Lemma~\ref{lem:extremal1}, $e$ must be extremal.
\end{proof}

\begin{lemma}\label{lem:01val}
Any extremal wiring has a representation that can be formed by taking the $\{0,1\}$-valued representation of the identity effect corresponding to all parties making the input $0$, applying no-signalling moves in such a way that it remains $\{0,1\}$-valued, and replacing some of the entries that are $1$ with $0$.
\end{lemma}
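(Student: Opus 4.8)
The plan is to reduce the statement to the case of a single deterministic wiring procedure and then exhibit an explicit sequence of no-signalling moves. First I would note that, by definition, the set of wirings is the convex hull of the effects of the individual deterministic procedures (fix a first system and input, then a function from each partial outcome history to the next system queried and its input, and finally an output function $f$). An extremal wiring, being an extreme point of this convex hull, must coincide with the effect of a single such deterministic procedure. For a procedure whose declared outcome is $a^\ast$, the corresponding effect vector $e^{\R}$ is $\{0,1\}$-valued: for each joint outcome ${\bf a}=(a_1,\dots,a_\parties)$ the procedure applies a uniquely determined input string ${\bf x}({\bf a})$ (party $i$'s input is the value the wiring prescribes given the outcomes of the parties queried before it), and $e^{\R}$ has a $1$ at position $({\bf a},{\bf x}({\bf a}))$ precisely when $f({\bf a})=a^\ast$, and $0$ elsewhere.

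Summing these effects over all values of $a^\ast$ gives the full measurement, hence a $\{0,1\}$-valued representation $u^{\R}_{\mathrm{wir}}$ of the identity with a single $1$ at $({\bf a},{\bf x}({\bf a}))$ for every ${\bf a}$. So it suffices to show: (i) $e^{\R}$ is obtained from $u^{\R}_{\mathrm{wir}}$ by replacing the $1$s with $f({\bf a})\neq a^\ast$ by $0$s, which is immediate from the previous sentence; and (ii) $u^{\R}_{\mathrm{wir}}$ is obtained from the all-zero-input identity representation $u^{\R}_{\bf 0}$ (the vector with a $1$ at $({\bf a},{\bf 0})$ for every ${\bf a}$) by no-signalling moves that keep the vector $\{0,1\}$-valued. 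Once (ii) is established there is nothing separate to check about $u^{\R}_{\mathrm{wir}}$ representing the identity, since no-signalling moves preserve the represented effect and $u^{\R}_{\bf 0}$ represents the identity.

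The core of the argument is (ii), which I would prove by changing the inputs one party at a time, in the order in which the wiring queries them. Concretely, define intermediate vectors $v_0=u^{\R}_{\bf 0},v_1,\dots,v_\parties=u^{\R}_{\mathrm{wir}}$, where $v_k$ carries the wiring input on the first $k$ parties queried (along each outcome history) and input $0$ on all parties not yet queried. To pass from $v_{k-1}$ to $v_k$, fix a history, which determines both the party $p$ queried at step $k$ and its prescribed input $x_p^\ast$; then fix the outcomes of all other parties and sum over $a_p$. This is exactly a generator $r_i$ of the no-signalling moves (party $p$ cannot signal to the rest: $\outs$ entries $+1$ at the $x_p=0$ slots and $\outs$ entries $-1$ at the $x_p=x_p^\ast$ slots). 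Subtracting this generator moves the block of $1$s sitting at the input-$0$ slots of party $p$ onto the input-$x_p^\ast$ slots. The $\{0,1\}$-valuedness is preserved because in $v_{k-1}$ party $p$ has not yet been queried, so it still carries input $0$; hence every target slot with $x_p=x_p^\ast\neq0$ currently holds a $0$, and no entry ever exceeds $1$ or drops below $0$. Distinct histories fix distinct outcomes for the already-queried parties, so the corresponding moves act on disjoint coordinates and may be applied simultaneously.

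The step I expect to be the main obstacle is the bookkeeping in (ii) for \emph{adaptive} wirings, where the identity of the party queried at step $k$ depends on the observed history: one must check that processing the decision tree by depth keeps the invariant ``parties not yet queried sit at input $0$'' consistent across branches, so that the moves used at level $k$ genuinely target empty slots and never collide. In the fixed-order case this is transparent; for the adaptive case it amounts to verifying that the map ${\bf a}\mapsto{\bf x}({\bf a})$ is well defined and that, restricted to any level of the tree, the moves partition cleanly over the branches. Once this invariant is verified, composing all the moves expresses $u^{\R}_{\mathrm{wir}}-u^{\R}_{\bf 0}$ as a single no-signalling move realised through $\{0,1\}$-valued intermediate vectors, and zeroing the appropriate $1$s recovers $e^{\R}$, completing the proof.
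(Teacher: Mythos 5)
Your proposal is correct, and it proves the lemma by a route that is organized genuinely differently from the paper's, even though both exploit the same underlying mechanism. The paper argues by induction on the number of parties: after a relabelling (``up to symmetry'') it assumes the extremal wiring first queries party 1 with input $0$, conditions on the outcome $a_1$ to obtain $\parties-1$ party sub-effects, and applies the induction hypothesis to each; the input-shifting no-signalling moves appear explicitly only in the base case $\parties=1$. Your proof unrolls this recursion into a direct construction: you first make explicit the step the paper leaves implicit, namely that an extremal wiring is the effect of a single deterministic procedure (an extreme point of the convex hull of finitely many such effects), you write down its canonical $\{0,1\}$-valued representation with exactly one $1$ per outcome string ${\bf a}$, at position $({\bf a},{\bf x}({\bf a}))$, and you then exhibit the move sequence concretely, processing the decision tree by depth and using the singleton-party generators of Lemma~\ref{lem:NSmoves} to shift each queried party's input from $0$ to its prescribed value. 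The invariant you isolate is exactly what makes this collision-free: since $v_{k-1}$ has one nonzero entry per outcome string and not-yet-queried parties sit at input $0$, every target slot is empty, and moves for distinct branches act on disjoint coordinates because two distinct histories diverge at a step where the \emph{same} party was queried with different outcomes (this is the precise form of your slightly loose claim that distinct histories fix distinct outcomes of already-queried parties, and it settles the adaptive case you flagged as the main obstacle). What your version buys is explicitness and self-containedness: it avoids the symmetry reduction by handling a nonzero first input with an explicit move, and it verifies $\{0,1\}$-valuedness at every intermediate step; what the paper's induction buys is brevity, at the cost of leaving the lifting of $(\parties-1)$-party moves to $\parties$-party moves and the extremal-implies-deterministic step to the reader.
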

\begin{proof}
  Consider first the case $\parties=1$. In this case the extremal wirings are the effects formed by choosing one of the $\ins$ measurements and then applying a function from $\{0,1,\ldots,\outs-1\}$ to $\{0,1,\ldots,\outs-1\}$ to the outcome. In this case the no-signalling moves\footnote{The term no-signalling doesn't make sense for $\parties=1$, but the mathematical conditions are well-defined and take the form $\sum_aP_{A|X=x}(a)=\sum_aP_{A|X=x'}(a)$ for all $x,x'$.} take us from the identity for input $0$ to that for all other choices of measurement so the statement holds.

  Now assume by induction that the statement holds for $\parties-1$ parties and consider $\parties$ parties.  The first step in an extremal wiring is to choose one of the $\parties$ boxes and make a fixed input to that box.  Up to symmetry, we can assume the first box is chosen and input $0$ is made (relabelling parties or inputs does not change whether an effect is a wiring or not).  In this case, the effect can only have non-zero entries where these correspond to elements of the state of the form $P(a_1a_2\ldots a_{\ins}|0x_2\ldots x_{\ins})$, i.e., where $x_1=0$.  For each outcome we can then consider the $\parties-1$ party effect that is performed conditioned on $x_1=0$ and the value of $a_1$.  By assumption, each of these sub-effects can be formed by taking the $\{0,1\}$-valued representation of the identity effect corresponding to $\parties-1$ parties making the input $0$, applying no-signalling moves in such a way that it remains $\{0,1\}$-valued, and replacing some of the entries that are $1$ with $0$.  Let us ignore the replacement of $1$ entries with $0$s for the moment and consider only the representation of identity.  Up to no signalling moves each of the sub effects corresponds to all parties measuring $0$, and, up to symmetry the measurement on the first box corresponds to $x_1=0$. Thus, up to no-signalling moves, the identity is that corresponding to all parties measuring $0$, and the effect is then formed by zeroing entries of this.
\end{proof}

\begin{lemma}\label{lem:pos}
Every effect can be represented by a $(\ins\outs)^\parties$-dimensional vector in which every entry is non-negative.
\end{lemma}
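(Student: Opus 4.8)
The plan is to recast the existence of a non-negative representation as a linear feasibility question and settle it with a theorem of alternatives (Farkas-type argument). First I would fix an arbitrary representation $e^{\R}$ of the effect $e$. Since two vectors represent the same effect exactly when their difference has zero inner product with every state, the representations of $e$ form the affine subspace $e^{\R}+L$, where $L=\mathrm{span}(\{r_i\})$ is the span of the no-signalling moves, i.e.\ the orthogonal complement of $\mathrm{span}(\cS_{\ns})$. The goal is then to show that $e^{\R}+L$ meets the non-negative orthant.

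I would apply the theorem of alternatives to the system $e^{\R}+\sum_i c_i r_i\geq 0$ in the free variables $c_i$. This system is feasible unless there is a certificate $y$ that is simultaneously (i)~componentwise non-negative, (ii)~orthogonal to every $r_i$, and (iii)~such that $\innerp{e^{\R}}{y}<0$; here the orthogonality in (ii) arises from the free variables $c_i$ and the non-negativity in (i) from the inequality slacks. So it suffices to rule out such a $y$.

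The crux is that any $y$ obeying (i) and (ii) is a non-negative multiple of a genuine state. Condition (ii) says $y$ satisfies all the no-signalling equalities and (i) is positivity; and, as in the discussion around Lemma~\ref{lem:NSmoves}, the no-signalling conditions force the block sum $\sum_{\bf a}y_{{\bf a}|{\bf x}}$ to be independent of ${\bf x}$. Calling this common value $\lambda\ge 0$, if $\lambda>0$ then $y/\lambda$ is no-signalling, non-negative and normalized, hence lies in $\cS_{\ns}$ and $y$ is a conic multiple of a state; if $\lambda=0$ then every block sums to zero and, by non-negativity, $y=0$. In either case (iii) fails, because validity of $e$ gives $\innerp{e^{\R}}{s}=\innerp{e}{s}\ge 0$ for every state $s$, whence $\innerp{e^{\R}}{y}\ge 0$. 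No certificate exists, so the system is feasible and $e$ has a non-negative representation.

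I expect the main obstacle to be the careful bookkeeping in the theorem of alternatives, in particular getting the signs of the dual vector $y$ correct and cleanly separating the roles of the free variables $c_i$ (which produce the orthogonality $y\perp r_i$) from the inequality slack (which produces $y\ge 0$). By contrast, the structural step --- that a non-negative no-signalling vector is a conic multiple of a state --- should be routine once block-sum invariance is established from the no-signalling conditions generated as in Lemma~\ref{lem:NSmoves}. A more constructive alternative would be to mimic the single-system computation, shifting the entries of each input block by a normalization move so that a jointly-chosen local-deterministic selection of minimal entries stays non-negative; but the Farkas route handles the multipartite bookkeeping more cleanly.
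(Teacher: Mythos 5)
Your proof is correct, and it rests on the same polyhedral duality as the paper's argument but runs it from the opposite side. The paper's proof considers the cone $\cV$ of non-negative vectors satisfying the no-signalling equalities, observes that any effect lies in the dual cone of $\cV$, invokes the characterization of that dual cone as the conic hull of the SBV effects $\{e_i\}$ together with $\{\pm r_j\}$, and then simply sets the $r_j$-coefficients to zero: what remains, $\sum_i t_i e_i$ with $t_i\geq 0$, is a non-negative representation of the same effect. You instead phrase the question as primal feasibility --- whether the affine space $e^{\R}+\mathrm{span}(\{r_i\})$ of representations meets the non-negative orthant --- and apply a theorem of the alternatives. Your certificate analysis is sound: a vector $y\geq 0$ orthogonal to every $r_i$ satisfies all no-signalling equalities (by Lemma~\ref{lem:NSmoves}), hence has input-independent block sums $\lambda\geq 0$, and is therefore either $0$ or $\lambda s$ for some $s\in\cS_{\ns}$; validity of $e$ then forces $\innerp{e^{\R}}{y}\geq 0$, so no certificate exists. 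The two arguments are dual to one another: where the paper needs the (Farkas-type) fact that the dual cone of a polyhedrally described cone is the conic hull of its defining vectors, you need the primal fact that $\cV$ is the conic hull of $\cS_{\ns}$. Your route has the advantage of being self-contained (the paper defers its dual-cone step to Theorem~7 of~\cite{barrett}), at the cost of the sign bookkeeping in the alternative, which you handle correctly.
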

\begin{proof}
  This is proven as part of Theorem~7 in~\cite{barrett}. We give the argument for completeness.  Consider the cone of non-normalized states in our representation, i.e., the set of $(\ins\outs)^\parties$-dimensional vectors $$\cV=\{v:\innerp{e_i}{v}\geq0 \ \forall i,\, \innerp{r_j}{v}=0\ \forall j\},$$
where $\{e_i\}$ are the SBV effects and $r_j$ are vectors representing the no-signalling moves. We can rewrite $\innerp{r_j}{v}=0$ as $\innerp{r_j}{v}\geq0$ and $\innerp{-r_j}{v}\geq0$. The dual cone is then that formed by the conic hull of $\{e_i\}_i\cup\{r_j\}_j\cup\{-r_j\}_j$. Thus, any effect can be written as $\sum_it_ie_i+\sum_jw_jr_j$, where $t_i\geq0$, but $w_j$ can be negative.  Since $\innerp{r_j}{v}=0$ for all $v\in\cV$, one representation of the effect is when the values of $\{w_j\}$ are set to zero. Thus, any effect can be written in the form $\sum_it_ie_i$ where $t_i\geq0$.
\end{proof}

It is helpful to consider the set of $(\ins\outs)^\parties$-dimensional identity effects that have positive entries.  These form a convex polytope, since they are defined by the vectors $u^{\R}$ for which every element is positive and such that $\innerp{s^{\Loc,i}}{u^{\R}}=1$ where $i$ runs over all local deterministic distributions. There are hence $(\ins\outs)^\parties$ equality constraints and $(\ins\outs)^\parties$ inequality constraints. We call the extreme points of this polytope the \emph{extremal representations of the identity effect}, and these can be computed using vertex enumeration.

\begin{lemma}\label{lem:areall}
Every extremal effect has a representation as a vector that can be formed by taking an 
extremal representation of the identity effect and replacing some of the non-zero entries with $0$s.
\end{lemma}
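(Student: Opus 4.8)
The plan is to construct a non-negative representation of the identity that has a representation of $e$ among its maskings, and then to upgrade it to an \emph{extremal} identity representation by invoking the extremality of $e$. By Lemma~\ref{lem:pos} the extremal effect $e$ has a non-negative representation $e^{\R}$, and since $0\le\innerp{u-e}{s}\le1$ for all $s\in\cS$ the complementary effect $u-e$ is a valid effect, so by Lemma~\ref{lem:pos} it also has a non-negative representation, which I denote $\bar e^{\R}$. The first goal is to choose $e^{\R}$ and $\bar e^{\R}$ with \emph{disjoint supports}. Granting this, $u^{\R}:=e^{\R}+\bar e^{\R}$ is a non-negative representation of the identity, and by disjointness $e^{\R}$ is precisely $u^{\R}$ with the entries in $\mathrm{supp}(\bar e^{\R})$ set to $0$; that is, $e^{\R}=u^{\R}|_{T}$ for $T:=\mathrm{supp}(e^{\R})$ is a masking of $u^{\R}$.

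The second ingredient is a domination observation: every state is a vector of conditional probabilities, hence has a representation $s^{\R}\ge0$, so any non-negative vector dominated entrywise by an identity representation represents a valid effect. Explicitly, if $0\le f^{\R}\le u^{\R}$ componentwise with $u^{\R}$ representing the identity, then $0\le\innerp{f^{\R}}{s^{\R}}\le\innerp{u^{\R}}{s^{\R}}=1$ for all $s\in\cS$. With this I would finish as follows. Decompose the non-negative identity representation as $u^{\R}=\sum_\ell\mu_\ell u^{\R}_{(\ell)}$ into extremal identity representations, with $\mu_\ell>0$ and $\sum_\ell\mu_\ell=1$. For each $\ell$ let $g^{\R}_{(\ell)}$ be $u^{\R}_{(\ell)}$ with all entries outside $T$ zeroed; then $0\le g^{\R}_{(\ell)}\le u^{\R}_{(\ell)}$, so $g^{\R}_{(\ell)}$ represents a valid effect $g_{(\ell)}$ and is a masking of the extremal identity representation $u^{\R}_{(\ell)}$. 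Restricting the convex combination to the coordinates of $T$ gives $\sum_\ell\mu_\ell g^{\R}_{(\ell)}=u^{\R}|_{T}=e^{\R}$, hence $\sum_\ell\mu_\ell g_{(\ell)}=e$ as effects. Since $e$ is extremal this forces $g_{(\ell)}=e$ for every $\ell$, so $e$ is represented by $g^{\R}_{(\ell)}$, a masking of an extremal identity representation --- exactly the claim.

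The main obstacle is the disjoint-support step, since summing two arbitrary non-negative representations of $e$ and $u-e$ generically produces overlaps, and an overlapping coordinate yields $0<e^{\R}_k<u^{\R}_k$, an entry that can neither be kept nor zeroed. Equivalently one must show that $u-e$ admits a non-negative representation vanishing on the support of some representation of $e$. I would attack this by first taking $e^{\R}$ to be a vertex of the polytope of non-negative representations of $e$, so that its support $T$ is minimal (of size at most the dimension of the state space), and then proving that $u-e$ has a non-negative representation supported on the complement of $T$. This is a linear feasibility problem in the no-signalling moves $\{r_i\}$, and I would settle it by LP duality (Farkas' lemma): an infeasibility certificate would be a vector $v$ in the span of the states that is non-negative off $T$ yet pairs negatively with $u-e$, and I would exclude this using that $u-e$ is a valid effect, the minimality of the vertex support $T$, and the structured form of the $r_i$ (each with $\outs$ entries equal to $+1$, $\outs$ equal to $-1$, and the rest $0$). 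This integrality/disjointness step is where the specific geometry of the no-signalling polytope enters and is the part requiring the most care; the rest of the argument is then bookkeeping with the domination inequality and the extremality of $e$.
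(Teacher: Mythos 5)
Your overall architecture is the same as the paper's: non-negative representations of $e$ and $u-e$ with disjoint supports sum to a non-negative representation of the identity, and the passage to \emph{extremal} identity representations is handled by decomposing that identity representation, masking each term, and invoking extremality of $e$ (your version of this last step, via the domination inequality, is clean and essentially matches the paper's). The genuine gap is the disjoint-support step, which you defer, and the route you sketch for it cannot work: you propose to derive it from the validity of $u-e$, the minimality of the support of a vertex representation $e^{\R}$, and the structure of the no-signalling moves $\{r_i\}$, but these ingredients are provably insufficient, because the disjointness claim is \emph{false} without extremality of $e$ as an effect. Concretely, for a single system with $\ins=\outs=2$, take $e^{\R}=(1/2,0\mid 0,1/4)$ in the notation of~\eqref{eq:notation}. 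Here the null vectors are exactly the multiples of $r=(1,1,-1,-1)$, so $e^{\R}$ is the \emph{unique} non-negative representation of $e$ (hence a vertex with minimal support $T=\{1,4\}$), while every representation of $u-e$ has the form $(s,\,1/2+s\mid 1/2-s,\,1/4-s)$; vanishing at the first coordinate forces $s=0$ and at the fourth forces $s=1/4$, so no representation of $u-e$ — non-negative or otherwise — is supported off $T$. (This $e$ is of course not extremal: $e^{\R}=\tfrac12(1,0,0,0)+\tfrac12(0,0,0,1/2)$ exhibits it as a convex mixture of distinct effects.) So any Farkas certificate you try to exclude must be excluded \emph{using extremality of $e$}, which your sketch never brings into that step.

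The missing idea — and the paper's key observation — is that extremality makes disjointness automatic for \emph{any} pair of non-negative representations, with no LP duality needed. Write $e^{\R}=\sum_i\lambda_ie_i$ and $f^{\R}=\sum_i\mu_ie_i$ in the SBV basis, where $f=u-e$, and suppose the supports overlap at $j$, i.e.\ $\lambda_j>0$ and $\mu_j>0$. Then $e^{\R}-\lambda_je_j$ is non-negative and dominated by $e^{\R}$, hence a valid effect, and $e^{\R}+\mu_je_j=u^{\R}-(f^{\R}-\mu_je_j)$ with $f^{\R}-\mu_je_j$ non-negative, hence also a valid effect. Since
\[
e^{\R}=\frac{\lambda_j}{\lambda_j+\mu_j}\left(e^{\R}+\mu_je_j\right)+\frac{\mu_j}{\lambda_j+\mu_j}\left(e^{\R}-\lambda_je_j\right),
\]
and neither $\pm$ perturbation by a multiple of the SBV $e_j$ is a null vector, this exhibits $e$ as a proper convex combination of two effects distinct from $e$, contradicting extremality. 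Substituting this short perturbation argument for your deferred LP step closes the gap and makes your proof essentially identical to the paper's.
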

\begin{proof}
  Let $e$ be an effect and $f=u-e$.  By Lemma~\ref{lem:pos}, we can represent $e$ and $f$ using $(\ins\outs)^\parties$-dimensional vectors $e^{\R}$ and $f^{\R}$ whose entries are non-negative. Write $e^{\R}=\sum_i\lambda_ie_i$ and $f^{\R}=\sum_i\mu_ie_i$, where $\{e_i\}$ are the SBV effects and $\{\lambda_i\}$ and $\{\mu_i\}$ are non negative. Thus, $u^{\R}=\sum_i(\lambda_i+\mu_i)e_i$ is a representation of the identity effect.

  We claim that if $e$ is extremal then for each $i$ either $\lambda_i=0$ or $\mu_i=0$.  Suppose by contradiction that there is some $j$ for which $0<\lambda_j<\lambda_j+\mu_j$. Then
  \begin{align*}
    e^{\R}&=\lambda_je_j+\sum_{i\neq j}\lambda_ie_i=\frac{\lambda_j}{\lambda_j+\mu_j}\left((\lambda_j+\mu_j)e_j+\sum_{i\neq j}\lambda_ie_i\right)+\left(1-\frac{\lambda_j}{\lambda_j+\mu_j}\right)\sum_{i\neq j}\lambda_ie_i\\
    &=\frac{\lambda_j}{\lambda_j+\mu_j}(e^{\R}+\mu_je_j)+\left(1-\frac{\lambda_j}{\lambda_j+\mu_j}\right)(e^{\R}-\lambda_je_j)\,.
  \end{align*}
Both $e^{\R}+\mu_je_j$ and $e^{\R}-\lambda_je_j$ represent effects (for the former, note that $e^{\R}+\mu_je_j=u^{\R}-(f^{\R}-\mu_je_j)$). Hence we have decomposed $e$ as a convex combination of other effects, contradicting the assumption that $e$ is extremal.  Note that this implies that for any extremal effect, there exists a representation of it and its complement that are orthogonal\footnote{In fact, what we have shown is even stronger: there is a representation in which the element-wise product of $e^{\R}$ and $f^{\R}$ is zero.}. Hence, we have shown that if $e$ is extremal, it can be formed by zeroing entries from a representation of an identity effect.  It remains to show that non-extremal representations of the identity effect need not be used.

Suppose $u^{\R}$ is a non-extremal representation of the identity effect, so $u^{\R}=\sum_i \nu_i u^{\R}_i$ with $\{u^{\R}_i\}$ being extremal representations of the identity effect, $\nu_i \geq 0$, $\sum_i \nu_i=1$ and at least two $\nu_i>0$. Let $Z$ be a map that zeroes some of the entries and suppose $e^{\R}=Z(u^{\R})=\sum_i \nu_i Z(u^{\R}_i)$. If there are two or more values of $i$ for which both $\nu_i>0$ and $Z(u^{\R}_i)\neq0$, then $e^{\R}$ is not extremal. If there is only one $i$ such that both $\nu_i>0$ and $Z(u^{\R}_i)\neq0$, then $e^{\R}$ is also not extremal (but is proportional to the zeroing of an extremal representation of the identity effect).
\end{proof}

\section{Computing extremal effects} \label{sec:algorithm}
Our method for computing extremal effects is suggested by Lemma~\ref{lem:areall}. The first step is to identify all extremal representations of the identity effect by means of a vertex enumeration.

Specifically, the condition $\innerp{u}{s}=1$ for all elements of the state space can be imposed by requiring 
$\innerp{u}{s}=1$ for all local deterministic $s$, since the linear span of the local deterministic states covers the state space. This means that the set of identity effects for $\parties$ parties can be expressed using $4^\parties$ equality constraints and $4^\parties$ inequality constraints (positivity of the individual entries). This problem is more tractable than performing the full vertex enumeration to compute all extremal effects directly. Instead, the extremal effects are obtained from the identity effects by considering sub-effects (see Lemma~\ref{lem:areall}).

Computing the set of all extremal effects scales badly with the parameters of the scenario and hence we do not compute all in most cases (although our algorithm would in principle allow this). In several scenarios we instead compute the deterministic extremal effects, which include all extremal wirings.

\subsection{Computing deterministic extremal effects}
Our method for computing all deterministic extremal effects is suggested by Lemma~\ref{lem:eff}. This algorithm recovers all deterministic extremal effects, which includes all wirings (cf.\ Lemma~\ref{lem:01val}).

We first find all the $\{0,1\}$-valued representations of the identity effect.  This can be done by the following algorithm:

\phantom{1}

\refstepcounter{alg}
\noindent{\bf Algorithm \arabic{alg} -- generate $\{0,1\}$-valued identity effects}\label{alg:id}
\begin{enumerate}
  \item Let $S=\{u^{\R}_1\}$ where $u^{\R}_1$ is any $\{0,1\}$-valued representation of the identity effect. [The algorithm can also be started with any initial set of $\{0,1\}$-valued representations of the identity effect.]
  \item\label{st:2} Generate $S'=\{s_j\pm r_i\}_{i,j}$, where $s_j$ are elements of $S$ and $r_i$ are a complete set of generators of the no-signalling moves taking values $\{-1,0,1\}$ (cf.\ Lemma~\ref{lem:NSmoves}).
  \item\label{st:3} Remove elements of $S'$ with negative entries and set $S=S'$.
  \item\label{st:4} Repeat steps~\ref{st:2} and~\ref{st:3} until $S$ stops increasing.
  \item Output $S$.
\end{enumerate}

Algorithm~\ref{alg:id} is a sub-algorithm of our main routine:

\phantom{1}

\refstepcounter{alg}
\noindent{\bf Algorithm \arabic{alg} -- generate all $\{0,1\}$-valued effects}\label{alg:all}
\begin{enumerate}
  \item Use Algorithm~\ref{alg:id} to generate all representations of the identity effect.
  \item\label{st:12} For each representation of the identity, form a new set of effects containing all effects that are obtained by deleting any number of $1$s from each of the representations.
  \item Take the union of all the sets generated.
  \item\label{st:14} For each element $e_i$ in this union, compute $\left(e_i,L(e_i)\right)$, where $L(e_i)=\left(\innerp{e_i}{s^{\Loc,1}},\innerp{e_i}{s^{\Loc,2}},\ldots\right)$ and generate a list $S=\left((e_1,L(e_1)),(e_2,L(e_2)),\ldots\right)$ of all these pairs.
  \item\label{st:15} Go through the list checking whether $L(e_i)=L(e_j)$ where $i\neq j$. If so, remove one of the two elements from $S$.
    \item Output $S$.
\end{enumerate}

Note that each identity effect has $\outs^\parties$ elements with value $1$, so there are $2^{\outs^\parties}$ ways of deleting $1$s for each representation of the identity effect in Step~\ref{st:12} (hence this algorithm does not scale well as the number of parties increases). Taking the union of the sets involves removing any duplicate representations.  However, at this point there remain different representations of the same effect in our set. The use of $L(e_i)$ is a convenient way to remove such different representations.

We are also interested in classifying the extremal effects as either wirings or non-wirings.  We first classify a set of \emph{wiring representations} using the following algorithm (cf.\ Lemma~\ref{lem:01val}):

\phantom{1}

\refstepcounter{alg}
\noindent{\bf Algorithm \arabic{alg} -- classify a representation of an effect as wiring representation}\label{alg:wir}

This algorithm takes as input a representation $e^\R$ of an effect.
\begin{enumerate}
\item\label{st:31} If the number of parties is 1, return $1$.\footnote{See later for an explanation as to why we can increase the number of parties to 2 in this step.}
\item Run over all exchanges of party 1 with each other party, and exchanges of input labels for the chosen party until the resulting effect (after the exchanges) has zero elements wherever $x_1\neq0$. If no such case is found return $0$.
\item For each of the possible outcomes $a_1$ compute the $\parties-1$ party effect conditioned on $x_1=0$ and the value of $a_1$ (there are $\outs$ instances to compute).
  \item Recursively run the same algorithm on each of these $\parties-1$ party effects. If all cases return $1$ then return $1$, otherwise return $0$.
\end{enumerate}

Algorithm~\ref{alg:wir} outputs $0$ if its input is not a wiring representation and outputs $1$ if it is. An effect is a wiring if and only if it can be expressed as a convex combination of effects that have wiring representations. An extremal effect is hence a wiring if and only if it has a wiring representation.

To connect Algorithm~\ref{alg:wir} to the previously mentioned notion of a wiring, consider the first step in an extremal wiring. This involves choosing one of the boxes to make an input to as well as the value of the input. Consider the case where this is the first box and the input made is $x_1=0$. In this case, $e^\R$ will have zeros for the elements corresponding to any probabilities conditioned on other values of $x_1$. Hence, if $e^\R$ is a wiring, there must exist a permutation $\Pi$ of parties and of input labels such that $\Pi e^\R$ has zero elements wherever $x_1\neq0$.  For each of the possible outcomes $a_1$ we can consider the $\parties-1$ party effect conditioned on $x_1=0$ and the value of $a_1$ (that each of these are an effect follows by considering the set of states that are a tensor product of the deterministic state that always outputs $a_1$ for $x_1=0$ with any $\parties-1$ party state on the remaining systems). We can check that all these smaller effects have an analogous property in the same way and recurse.  If the required label permutation exists at all levels, we can conclude that the effect corresponds to a wiring.

Suppose we run Algorithm~\ref{alg:wir} on a particular $\{0,1\}$-valued representation $e^\R$ of an effect $e$. If we get output $1$ then we know that $e$ can be implemented as a wiring. However, if we get $0$, it could be that there is an alternative representation of $e$ that is a wiring representation. To understand which effects are wirings or not we can modify Steps~\ref{st:14} and \ref{st:15} of Algorithm~\ref{alg:all} to the following:
\begin{enumerate}[label=\arabic{enumi}$'$]\setcounter{enumi}{3}
\item\label{st4p} For each element $e_i$ in this union, compute $\left(e_i,L(e_i),W(e_i)\right)$, where $L(e_i)$ is as before and $W(e_i)$ is whether the representation is a wiring representation or not and generate a list $S=\left((e_1,L(e_1),W(e_1)),(e_2,L(e_2),W(e_2)),\ldots\right)$ of all these pairs.
\item\label{st5p} Go through the list checking whether $L(e_i)=L(e_j)$ where $i\neq j$. If so, check whether $W(e_i)=1$. If so, remove element $j$ from the list, otherwise remove element $i$.
\end{enumerate}

With this modified algorithm, we can identify effects as either wirings or non-wirings as they are found.

\section{Results} \label{sec:examples}
\subsection{Two inputs and two outputs per party}
In this section we discuss cases with $\ins=2$ and $\outs=2$ for $\parties=2,3,4$.

\subsubsection{$\parties=2$}
This case was already computed in~\cite{SPG} using a different procedure. The output of our algorithm in this case agrees with that of~\cite{SPG}. In particular, there are $82$ extremal effects. We break these down into $7$ classes (two effects are in the same class if they are equivalent up to relabellings, where we can relabel the parties, the inputs for each party and the outputs for each input of each party)\footnote{Note that in~\cite{SPG} they classify differently into $5$ classes.}. All of the effects are wirings, which was already known from~\cite{ShortBarrett}.  That all of the effects are wirings means we can alter Step~\ref*{st:31} of Algorithm~\ref{alg:wir} to ``if the number of parties is 2, return $1$''.

In this case the deterministic effects are sufficient for describing the full effect polytope, so the more general case is omitted here.

\subsubsection{$\parties=3$}
A complete description of all deterministic effects was not known for this case, although in~\cite{ShortBarrett} it was noted that for three parties not all effects are wirings. We have computed all the $\{0,1\}$-valued extremal effects for this case, finding $28886$ effects of which $2048$ are non-wirings.  These extremal effects break down into $66$ classes of which $3$ correspond to non-wirings. They are collected in the online Supplementary Material~\cite{supp}.  A representative of each of the classes of non-wiring are as follows:
\begin{align*}
  \innerp{e_1}{s}&=P(000|100)+P(001|010)+P(110|001)\\
  \innerp{e_2}{s}&=P(000|100)+P(001|010)+P(110|001)+P(010|000)\\
  \innerp{e_3}{s}&=P(000|100)+P(001|010)+P(110|001)+P(010|000)+P(101|000)\,.
\end{align*}

Performing a vertex enumeration using the software {\sc Porta}~\cite{porta}, we are also able to find all the extremal representations of the identity effect.  It turns out that there are 710760 of these, of which 744 are $\{0,1\}$-valued (and 680 of the latter are wiring representations).  These extremal representations of the identity effect break down into 307 classes, of which 9 are $\{0,1\}$-valued (and 8 of the latter are wiring representations).  Representatives of each of the 307 classes can be found in the Supplementary Material~\cite{supp}.

In principle these representations can be used to generate all extremal effects as suggested by Lemma~\ref{lem:areall}. However, this computation is prohibitively time consuming because many of the classes of extremal representations of the identity effect have $27$ non-zero entries, and for these there are $2^{27}$ candidate effects that can be formed by zeroing various entries (by contrast, the $\{0,1\}$-valued extremal representations of the identity effect have only $8$ non-zero entries, which is why we could solve this case). Nevertheless, to illustrate that this can in principle be done we have taken two of the identity effects with a smaller number of non-zero entries, namely $u^{\R}_1$ and $u^{\R}_2$ satisfying
\begin{align*}
\innerp{u^{\R}_1}{s}&=\frac{1}{3}\big(P(000|101) + 2P(000|110) + P(001|010) + P(001|011) + P(001|100) + P(010|010) + P(010|011) + \\
&\quad\qquad P(010|101) + P(011|100) + 2P(011|111) + P(100|100) + 2P(100|111) + P(101|010) + P(101|011) + \\&\quad\qquad
 P(101|101) + P(110|010) + P(110|011) + P(110|100) + P(111|101) + 2P(111|110)\big)\\
\innerp{u^{\R}_2}{s}&= \frac{1}{2}\big(P(000|011) + P(000|101) + P(001|011) + P(001|101) + P(010|101) + P(010|110) + P(011|010)+\\
&\quad\qquad P(011|101) + P(100|011) + P(100|100) + P(101|011) + P(101|110) + P(110|100) + P(110|111) + \\
&\quad\qquad P(111|010) + P(111|111)\big)\,,
\end{align*}
and computed all the sub effects from these (up to symmetry), using a linear program to remove any that are convex combinations of the extremal deterministic effects we already found. The output of this computation is given in the Supplementary Material~\cite{supp}. The following are two examples:
\begin{align*}
    \innerp{e_4}{s}&=\frac{1}{3}\left(P(001|010)+P(001|100)+P(010|011)+P(101|010)+P(111|101)+2P(111|110)\right)\\
    \innerp{e_5}{s}&=\frac{1}{2}\left(P(000|011)+P(001|011)+P(010|110)+P(011|010)+P(100|100)+P(111|010)\right)\,.
\end{align*}
    \label{sec:effect_examples}

\subsubsection{$\parties=4$}
In this case computing all the extremal representations of the identity effect is not feasible in reasonable time using {\sc Porta}. Furthermore the number of extremal effects is too large to directly use our previous technique for the $\{0,1\}$-valued extremal effects. Instead we can compute all the $\{0,1\}$-valued extremal effects by computing one representative of each symmetry class.  The computation works in the same way as before, but we remove symmetries at every step.

In particular, in Algorithm~\ref{alg:id} we add a step between Steps~\ref{st:3} and~\ref{st:4} that removes elements of $S$ that are equal to others under relabelling symmetries.  In Algorithm~\ref{alg:all}, rather than using the list of local values $L(e_i)$ we generate a canonical form of these by generating $L(e_i)$ for every symmetry of $e_i$ and then storing the first of all of these according to some ordering function (e.g., since each list $L(\Pi e_i)$ is $\{0,1\}$-valued, they can be ordered as a binary number). We run Algorithm~\ref{alg:all} with the modification to classify into wiring or non-wiring representations (i.e., using Steps~\ref{st4p} and~\ref{st5p}).

Overall we find $168301$ classes of extremal $\{0,1\}$-valued effect, of which $124698$ are wiring representations.  By generating all the symmetries of each, we can then compute the total number of $\{0,1\}$-valued effects to be $7940781474$, of which $4729832866$ are wiring representations. Because of the size, we only supply Supplementary files with an element of each class in this case~\cite{supp}.

\subsection{Generalisations: more inputs and outputs}

In the case $\ins=2$ and $\outs=2$ we were only able to partially solve the cases with $\parties=3$ and $\parties=4$.  Increasing the number of inputs and outputs further increases the complexity, but we can make a few remarks.

Firstly consider the case $\parties=2$. It was proven in~\cite{ShortBarrett} that the bipartite effect spaces also only contain wirings. Using our code we enumerate the number of classes for the first few cases, as well as the total number of effects (see Table~\ref{tab:numbers}).
\begin{table}
    \centering
    \begin{tabular}{c|ccc}
$\outs$&2&3&4\\\hline
classes $\ins=2$&7&44&523\\
classes $\ins=3$&7&48&-\\
classes $\ins=4$&7&-&-\\\hline
total $\ins=2$&82&8930&2977858\\
total $\ins=3$&248&43400&-\\
total $\ins=4$&562&-&-
\end{tabular}
    \caption{Numbers of classes and effects for two parties with various numbers of inputs and outputs. (We use - to indicate that we did not compute this case). The number of classes in the case $\outs=2$ will remain at 7 for any $\ins$.}
    \label{tab:numbers}
\end{table}
Data with the full set of extremal effects for these cases can be found in the Supplementary Material~\cite{supp}.

In the case $N=3$, the only additional case we attempt is $\ins=3$, $\outs=2$.  Here we use the previous method to compute the $\{0,1\}$-valued effects, obtaining $79$ classes of such effect of which $76$ are wirings and $3$ are non-wirings.  In total the number of $\{0,1\}$-valued effects is $505136$ which breaks down as $449840$ wirings and $55296$ non-wirings.  Again, these cases can be found in the Supplementary Material~\cite{supp}.

Our codes can also be used for the enumeration of all extremal effects in these scenarios (within the computational limitations). We omit  explicit characterisations here. 

\section{Applications of non-wirings} \label{sec:application}
In this section we discuss some applications of non-wiring measurements, focusing on those that outperform wirings.

\subsection{State discrimination}
Given a black box that outputs one of two possible (known) states, $s_1$ and $s_2$, with probability $1/2$ each, the task is to choose a measurement that gives the highest probability of correctly guessing which state was produced given just one copy.

When trying to discriminate between two probability distributions, $P_X$ and $Q_X$, the guessing probability is
$$\frac{1}{2}\left(1+D_{\mathrm{C}}(P_X,Q_X)\right),$$
where $D_{\mathrm{C}}$ is the total variation distance (the subscript C indicating classical), i.e., $D_{\mathrm{C}}(P,Q)=\frac{1}{2}\sum_x|P_X(x)-Q_X(x)|$. In the case of two quantum states, $\rho_1$ and $\rho_2$, this optimal guessing probability is
$$\frac{1}{2}\max_{E_1,E_2}\left(1+D_{\mathrm{C}}\left(\{\tr(E_1\rho_1),\tr(E_2\rho_1)\},\{\tr(E_1\rho_2),\tr(E_2\rho_2)\}\right)\right)=\frac{1}{2}\left(1+D_{\mathrm{Q}}(\rho_1,\rho_2)\right),$$
where $\{E_1,E_2\}$ form a POVM and $D_{\mathrm{Q}}$ is the trace distance (the subscript Q indicating quantum), i.e., $D_{\mathrm{Q}}(\rho_1,\rho_2)=\frac{1}{2}\tr\left|\rho_1-\rho_2\right|$ (see, e.g.,~\cite{Helstrom,Nielsen&Chuang}).

The analogous formula for boxworld is that the optimal probability is
$$\frac{1}{2}\max_{e_1,e_2}\left(1+D_{\mathrm{C}}\left(\{\innerp{e_1}{s_1},\innerp{e_2}{s_1}\},\{\innerp{e_1}{s_2},\innerp{e_2}{s_2}\}\right)\right)=\frac{1}{2}\max_{e_1}\left(1+\left|\innerp{e_1}{s_1}-\innerp{e_1}{s_2}\right|\right),$$
where the first maximization is over all measurements $\{e_1,e_2\}$ and the second is over all effects $e_1$. The optimum will always be achieved by an extremal effect, hence, in cases where we have computed all extremal effects, we can calculate it by running over all of these. By analogy with the quantum and classical cases, it is natural to define $D_{\mathrm{B}}(s_1,s_2):=\max_{e_1}|\innerp{e_1}{s_1}-\innerp{e_1}{s_2}|$ (it is not clear how to remove the maximization from this expression in this case). The quantity $D_{\mathrm{B}}$ also satisfies the requirements of a distance measure~\cite{ShortWehner}.

\bigskip

There are pairs of states that can be perfectly distinguished with non-wirings, but for which the same is not true when only wirings are considered. An example for three 2-input, 2-output systems are the states\footnote{See~\eqref{eq:state_rep} for the ordering of the components.}:
\begin{align}
s_1= (&177, 0, 183, 177, 183, 177, 0, 183, 88, 89, 0, 360, 184, 176, 92, 91,
184, 88, 176, 89, 92, 0, 91, 360, 52, 220, 36, 229,\nonumber\\
&54, 38, 222, 229, 0, 88, 92, 184, 360, 89, 91, 176, 52, 36, 54, 222, 220, 229, 38, 229, 54, 52, 38, 220, 222, 36, 229,\nonumber\\
&229, 34, 72, 72, 186, 72, 186, 186, 272)/1080 \label{eq:s1}\\
s_2= (&177, 0, 183, 177, 183, 177, 0, 183, 88, 89, 360, 0, 175, 185, 92, 91, 175, 88, 185, 89, 92, 360, 91, 0, 227, 36, 221, 53,\nonumber\\
&229, 223, 38, 53, 360, 88, 92, 175, 0, 89, 91, 185, 227, 221, 229, 38, 36, 53, 223, 53, 229, 227, 223, 36, 38, 221, 53,\nonumber\\
&53, 269, 187, 187, 72, 187, 72, 72, 34)/1080 \label{eq:s2}.
\end{align}
The optimal\footnote{Note that we cannot run over all non-wiring effects in this case because we have not computed them all. However, because we can find a non-wiring effect that distinguishes with probability $1$, this must be optimal.} non-wiring effect, which perfectly distinguishes these, is the deterministic effect
\begin{align*}
e_1=&(0,0,0,0,0,0,1,0,0,0,0,1,0,0,0,0,0,0,0,0,0,0,0,1,0,0,0,0,0,0,0,0,0,0,0,0,1,0,0,0,0,0,0,0,0,0,0,0,0,0,0,\phantom)\\
  &\phantom(0,0,0,0,0,0,0,0,0,0,0,0,0) 
  \end{align*}
(which satisfies $\innerp{e_1}{s}=P(110|000)+P(011|001)+P(111|010)+P(100|100)$ for a general state $s$) from which one can verify that $\innerp{e_1}{s_1}=1$ and $\innerp{e_1}{s_2}=0$.  By running over all wirings effects find that the maximum guessing probability for these two states using wirings is $5/6$.  This best possible distinguishing probability for wirings is achieved by the wiring
\begin{align*}
e'_1 = (&0, 0, 0, 0, 0, 0, 0, 0, 0, 0, 0, 0, 0, 0, 0, 0, 0, 0, 0, 0, 0, 0, 1, 
0, 1, 0, 1, 0, 1, 1, 0, 0, 0, 0, 0, 0, 0, 0, 0, 0, 0, 0, 0, 0, 0, 0, 
0, 0, 0, 0, 0, \\ &0, 0, 0, 0, 0, 0, 0, 0, 0, 0, 0, 0, 0).
\end{align*}

As the non-wiring extremal effects naturally split into those that are deterministic and those that are not, a natural question is whether the advantages that the deterministic non-wiring $e_1$ above exhibits for state discrimination can also be found for non-deterministic ones. We find that this is the case, in the sense that there are pairs of states that can be perfectly distinguished only by a non-deterministic non-wiring. Such an example are the states
\begin{align*}
t_1 = (&736, 753, 223, 1367, 1370, 205, 904, 922, 644, 845, 745, 845, 730, 845, 136, 1690, 455, 429, 504, 1691, 806, 1127,\\
&1468, 0, 884, 0, 505, 1690, 243, 1690, 623, 845, 416, 471, 1127, 804, 1690, 487, 0, 1485, 887, 0, 241, 1690, 487, 1690,\\
&640, 845, 416, 429, 1127, 846, 845, 1127, 845, 845, 0, 845, 1128, 845, 1127, 845, 0, 1690)/6480\\
t_2 = (&124, 124, 279, 0, 0, 275, 137, 141, 0, 248, 98, 181, 94, 181, 278, 0, 305, 124, 98, 0, 0, 0, 137, 416, 0, 429, 98, 0, 0, 0,\\
&372, 181, 124, 305, 0, 0, 0, 94, 416, 141, 0, 429, 0, 0, 94, 0, 376, 181, 124, 124, 0, 181, 181, 0, 235, 235, 0, 248, 0,\\
&181, 0, 181, 470, 0)/1080
\end{align*}
which are perfectly distinguished by the non-wiring defined by 
\begin{align*}
f_1= ( &0, 0, 0, 0, 0, 0, 0, 0, 0, 0, 0, 0, 0, 0, 0, 0, 0, 0, 0, 1, 0, 0, 0, 0, 1, 0, 0, 0, 1, 1, 0, 0, 0,
0, 0, 0, 1, 0, 0, 0, 1, 0, 1, \\
&1, 0, 0, 0, 0, 0, 0, 1, 0, 0, 1, 0, 0, 0, 0, 0, 0, 0, 0, 0, 1)/2.
\end{align*}
Using wirings \emph{and} deterministic non-wirings we can only correctly guess which of these two states is present with probability at most $2423/2592\approx0.935$.

We can turn this problem around and ask which non-wirings are advantageous for state discrimination (meaning that they outperform wiring effects for some pair of states). For any non-wiring effect $e$, this question can be answered, using a linear program. Let $s_1$ and $s_2$ be the two states to be distinguished and $\mu$ be fixed. The linear program is
\begin{align*}
{\min}_{s_1,s_2,\nu} \qquad  & \nu \\
\operatorname{subject \ to} \qquad 
& s_1, \ s_2\geq 0 \\
& v_{\operatorname{N}}.s_1 = 1, \ \quad \quad v_{\operatorname{N}}.s_2 = 1 \\
& M_{\operatorname{NS}}\,s_1 = 0, \quad \quad M_{\operatorname{NS}}\,s_2 = 0 \\
& e.(s_1-s_2) = \mu, \quad M_{\operatorname{W}}(s_1-s_2) \leq  \nu 
\end{align*}
where $v_{\operatorname{N}}$ is a vector that encodes the normalisation constraints, and $M_{\operatorname{NS}}$ and $M_{\operatorname{W}}$ are matrices encoding the non-signalling constraints\footnote{The rows of $M_{\operatorname{NS}}$ are the generators, $\{r_i\}$, of the no-signalling moves, for instance.} and wirings respectively. (An inequality between a vector and a number is interpreted element-wise.) A non-wiring $e$ is advantageous if $\nu<\mu$. In the case of perfect distinguishability $\mu$ is set to $1$. In order to find any separation, $\mu$ could be taken as a variable that is optimised and $\mu -\nu$ maximized instead.

We have implemented this program in {\sc Matlab}, relying on {\sc YALMIP}~\cite{yalmip} and {\sc MOSEK}~\cite{mosek} to solve the linear programs. Using this program, various examples analogous to the ones above can be found. Checking all effects mentioned in Section~\ref{sec:effect_examples} with this program, we find that there are also examples of non-wirings that only allow for distinguishing states perfectly that can also be perfectly distinguished with wirings.  In addition, many of the effects outperform wirings with respect to the distinguishing probabilities they achieve for some states, but without reaching perfect distinguishability.

\subsection{Nonlocality without entanglement}

The phenomenon of \emph{quantum nonlocality without entanglement}~\cite{BDFMRSSW} describes the situation in which a set of product effects are combined to make a measurement that cannot be implemented by local operations and classical communication. In a sense these measurements are the quantum analogue of non-wirings. Analogously to~\cite{BDFMRSSW}, we construct a set of eight states and an eight outcome measurement that can perfectly distinguish these, while they are not perfectly distinguishable with wirings.  Our states are: 
\begin{align*}
    s_1 &= t_3 \otimes t_4 \otimes t_2 \quad
    s_2 = t_1 \otimes t_4 \otimes t_2\quad
    s_3 = t_4 \otimes t_2 \otimes t_3\quad
    s_4 = t_4 \otimes t_2 \otimes t_1\\
    s_5 &= t_1 \otimes t_1 \otimes t_3\quad
    s_6 = t_1 \otimes t_3 \otimes t_1\quad
    s_7 = t_3 \otimes t_1 \otimes t_4\quad
    s_8 = t_3 \otimes t_1 \otimes t_2,
\end{align*}
where $t_1=\left(1,0\mid 1,0\right)$, $t_2=\left(1,0\mid 0,1\right)$, $t_3=\left(0,1\mid 1,0\right)$, $t_4=\left(0,1\mid 0,1\right)$ following the notation from \eqref{eq:notation}.
These can be perfectly distinguished by the measurement defined by the effects that for any state $s$ behave as
\begin{align*}
    \langle e_1,s \rangle  &= P(110|000) \quad
    \langle e_2,s \rangle  = P(011|001) \quad
    \langle e_3,s \rangle  = P(111|010)\quad
    \langle e_4,s \rangle  = P(100|100)\\
    \langle e_5,s \rangle  &= P(001|000)\quad
    \langle e_6,s \rangle  = P(010|001)\quad
    \langle e_7,s \rangle  = P(101|010)\quad
    \langle e_8,s \rangle  = P(000|100).
\end{align*}
To show that no wiring can achieve perfect distinguishability of these 8 states, we have enumerated all deterministic 8-outcome wiring measurements. We find that the maximum guessing probability achievable with a wiring when the states are chosen uniformly is $7/8$. This is achieved by the wiring with effects satisfying\footnote{This corresponds to: input $x_3=0$, if $a_3=0$, input $x_2=0$ then $x_1=a_2\oplus1$, while if $a_3=1$, input $x_1=0$ then $x_2=a_1$.}
\begin{align*}
    \langle e_1,s \rangle  &= P(110|000) \quad
    \langle e_2,s \rangle  = P(010|000) \quad
    \langle e_3,s \rangle  = P(111|010)\quad
    \langle e_4,s \rangle  = P(100|100)\\
    \langle e_5,s \rangle  &= P(001|000)\quad
    \langle e_6,s \rangle  = P(011|000)\quad
    \langle e_7,s \rangle  = P(101|010)\quad
    \langle e_8,s \rangle  = P(000|100).
\end{align*}

Our example disproves Observation~1 from~\cite{Banik}. Indeed we show that these 8 states cannot be perfectly distinguished with 1-way LOCC in boxworld, while a global measurement achieves this. We see this as an indication that it is not the local indistinguishability within pairs of local states that causes this phenomenon in this example but rather the existence of separable but global measurements.

Note that on the other hand an analogue to the bipartite (9-state) example from~\cite{BDFMRSSW, Groisman} cannot be constructed, since all bipartite measurements in box-world (even in higher dimensions) are wirings~\cite{ShortBarrett}. 

In boxworld all measurements are separable so a distinction between separable and entangled measurements cannot be made. That we can still demonstrate that separable measurements outperform wiring measurements suggests that in some contexts comparing separable and wiring measurements may be more natural than comparing separable and entangled measurements, although in others, e.g., when considering teleportation, entangled measurements are required.

\subsection{Nonlocality distillation}\label{sec:NLdist}
Consider two parties, Alice and Bob, who hold parts of $t$ bipartite systems, with each subsystem having two inputs and two outputs. For simplicity, take these $t$ systems to be identical (with state $\hat{s}$). A nonlocality distillation protocol seeks to use these $t$ systems to give a larger violation of a Bell inequality than is possible with only 1. The most general strategy is for each party to associate a $t$-system 2-outcome measurement with each possible input. Such measurements have the form $\{e,u-e\}$ and so can be expressed in terms of one effect. Thus, the overall strategy can be expressed using 4 effects that act on $t$ systems (one for each of Alice's inputs and one for each of Bob's inputs). Because the individual states are identical, the overall starting state is the $t$-fold tensor product, $s=\hat{s}^{\ot t}$. If $e_x$ are the effects associated with outcome $0$ when Alice's input is $x$, and $f_y$ are likewise those for Bob, then the outcome probabilities are given by $P'(00|xy)=\innerp{e_x\ot f_y}{s}$, $P'(01|xy)=\innerp{e_x\ot(u-f_y)}{s}$ etc., where the tensor factors need to be matched appropriately ($s\in\cS_{A_1B_1A_2B_2\ldots A_tB_t}$, $e_x\in\cE_{A_1A_2\ldots A_t}$ and $f_y\in\cE_{B_1B_2\ldots B_t}$). That the overall effect is a tensor product reflects the independence of Alice's and Bob's operations. The idea of nonlocality distillation is to choose the four effects $e_0$, $e_1$, $f_0$ and $f_1$ so as to maximize the violation of a Bell inequality in the resulting distribution, $P'(ab|xy)$. Figure~\ref{fig:oval} depicts this intuitively for $3$ shared systems.

\begin{figure}[h]
    \centering
    \includegraphics[scale=0.6]{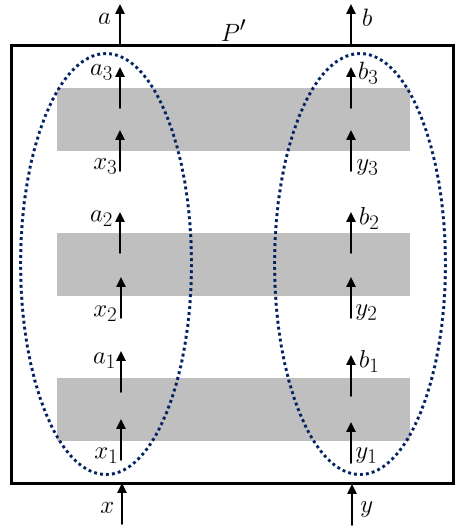}
    \caption{Illustration of nonlocality distillation in the case $t=3$. The grey boxes represent the identical initial bipartite states. The outer black frame represents the final correlations $P'$. The dashed ovals indicate the systems the measurements are performed on (left for Alice; right for Bob).}
    \label{fig:oval}
\end{figure}

For systems with two inputs and two outputs, the only extremal Bell inequality (up to symmetry) is the Clauser-Horne-Shimony-Holt (CHSH) inequality, which we can express as $\operatorname{CHSH}(P(ab|xy))=E_{00}+E_{01}+E_{10}-E_{11}$, with $E_{xy}=P(a=b|xy)-P(a\neq b|xy)$. We hence use this as our measure of nonlocality.
Since optimizing over all effects for one party is a linear program, we can run over all extremal effects for Alice and do a linear program for Bob to determine the optimal strategy for CHSH-value distillation given $3$ shared systems~\cite{eftaxias2022advantages} (in principle we could also do $4$, but, given the number of effects, the computation time is prohibitive). 

We perform the optimization for states taken from two 2-dimensional cross sections of the set of non signalling distributions. To describe these cross sections, we use 
$P^{\rm L}_i(ab|xy)=\delta_{a, \mu x \oplus \nu}\, \delta_{b, \sigma y \oplus \tau}$ for $\mu, \nu,\sigma,\tau \in \{0,1\}$, $i=1+\tau+2\sigma+4\nu+8\mu$ to enumerate the set of $16$ local deterministic boxes and $P^{\rm NL}_i(ab|xy)=\frac{1}{2}\delta_{a \oplus b, x y \oplus \mu x \oplus \nu y \oplus \sigma}$ for $\mu, \nu, \sigma \in \{0,1\}$, $i=1+\sigma+2\nu+4\mu$ to denote the set of $8$ extremal no-signalling boxes~\cite{Cirelson93,PR}.  In terms of these, the cross sections we have examined are:
\begin{align}
\mathrm{CS~I:}&\ \omega P^{\mathrm{NL}}_1+\frac{\eta}{2}( P^{\mathrm{L}}_1+P^{\mathrm{L}}_6)+(1-\omega-\eta)P^{\mathrm{O}} \nonumber \\
\mathrm{CS~III:}&\ \omega P^{\mathrm{NL}}_1+\frac{\eta}{2} (P^{\mathrm{L}}_1+P^{\mathrm{L}}_9)+(1-\omega-\eta)P^{\mathrm{O}}\,,
\label{CSequations} \end{align}
where $P^{\mathrm{O}}=3/4P^{\mathrm{NL}}_1+1/4P^{\mathrm{NL}}_2$ is local and $\eta,\omega\geq0$ with $\eta+\omega\leq1$. The labelling of these cross sections is chosen to follow~\cite{eftaxias2022advantages}, and the reason we use them is because they have been considered previously.

For the case $t=3$, the effects found to give rise to CHSH distillation are given below (expressed in terms of their inner product with an arbitrary tripartite state, represented by the ovals in Figure~\ref{fig:oval}). On Alice's side
\begin{align}
  \innerp{e_0}{s}&=P(000|000)+P(011|000)+P(101|000)+P(110|000) \nonumber \\
  \innerp{u-e_0}{s}&=P(001|000)+P(010|000)+P(100|000)+P(111|000) \nonumber \\
  \innerp{e_1}{s}&=P(111|011)+P(100|101)+P(001|110)+P(010|111) \nonumber \\
  \innerp{u-e_1}{s}&=P(011|011)+P(110|101)+P(000|110)+P(101|111) \label{effectAlice}
\end{align}
while on Bob's
\begin{align}
  \innerp{f_0}{s}&=P(011|011)+P(110|101)+P(000|110)+P(101|111) \nonumber \\
  \innerp{u-f_0}{s}&=P(111|011)+P(100|101)+P(001|110)+P(010|111) \nonumber \\
  \innerp{f_1}{s}&=P(101|000)+P(000|001)+P(110|010)+P(011|100) \nonumber \\
  \innerp{u-f_1}{s}&=P(010|000)+P(001|001)+P(100|010)+P(111|100). \label{effectBob}
\end{align}
Note that $e_1$, $f_0$ and $f_1$ are not wirings.
Referring again to Figure~\ref{fig:oval}, three systems characterised by identical states, each corresponding to coordinates $(\eta,\omega)$ in CS~I or CS~III and initial nonlocality $\text{CHSH}_{\text{i}}=2(1+\omega)$, get distilled through the protocol consisting of the effects \eqref{effectAlice} and \eqref{effectBob}, and the final nonlocality (that of the $P'$ correlations) is
\begin{align}
    \text{CHSH}_{\text{f}}^{\text{I}}&= \frac{1}{32}\Big[7\omega^3-15\eta^3+33\omega^2+57\omega+3\eta^2(7+11\omega)+3\eta(9+26\omega+13\omega^2)+31\Big] \label{post_chsh_i} \\[8pt]
    \text{CHSH}_{\text{f}}^{\text{III}}&= \frac{1}{32}\Big[7\omega^3+5\eta^3+33\omega^2+57\omega+\eta^2(13+25\omega)+3\eta(5+18\omega+9\omega^2)+31\Big]. \label{post_chsh_iii}
\end{align}
The distillable region (where $\text{CHSH}_{\text{f}}> \text{CHSH}_{\text{i}}$) for the protocol is shown in Figure~\ref{distill_i_iii} for the two cross sections, CS~I and CS~III. 

\begin{figure}[h]
\centering
\begin{minipage}[t]{0.47\columnwidth}
	\includegraphics[scale=0.55]{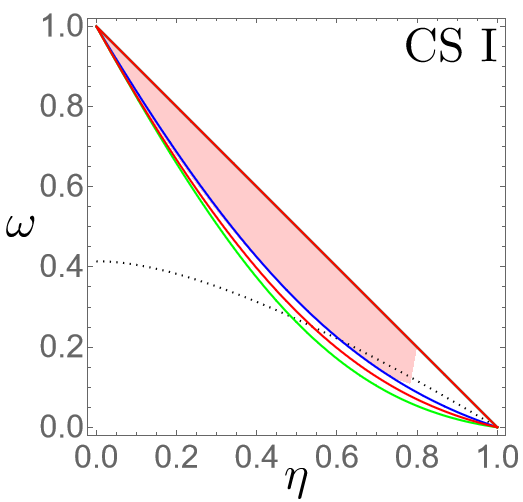}
\end{minipage}
\begin{minipage}[t]{0.05\columnwidth}
	
\end{minipage}
\begin{minipage}[t]{0.47\columnwidth}
	\includegraphics[scale=0.55]{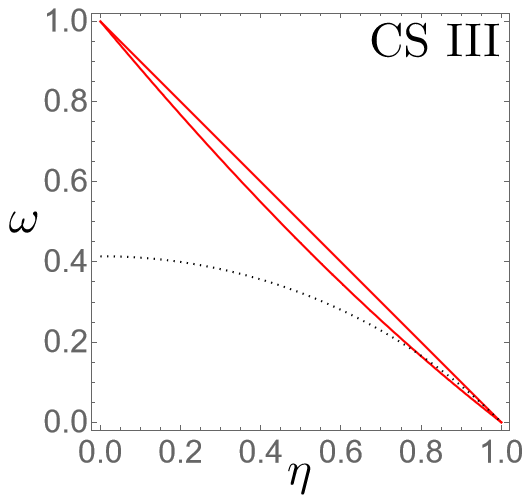}
\end{minipage}
\caption{The red lines show the boundary of the distillable regions using our non-wiring protocol, in different cross-sections of the (2,2,2) no-signalling polytope. The dotted curves represent the boundary of the quantum realizable correlations. On the left, the blue curve represents the boundary of the distillable region using the {\sc xor} protocol of~\cite{Allcock}, while the green one represents correlations that can be distilled through the {\sc or} protocols of~\cite{eftaxias2022advantages}. The red shaded area shows where our non-wiring protocol achieves higher final CHSH values than the protocols of~\cite{eftaxias2022advantages,Allcock,Hoyer,Forster2009} (for the protocol of~\cite{Forster2009}, the comparison has been made with both its 2-copy and 3-copy versions). Note that in CS~III there is no two-copy protocol that can distil any of the states~\cite{eftaxias2022advantages}.}
\label{distill_i_iii}
\end{figure}

\subsection{Limitations for information processing with wirings and boxworld's non-wiring operations} 

Despite the advantages we managed to demonstrate above, access to non-wirings in boxworld does not unlock the same potential as access to measurements that cannot be implemented as local measurements and classical communication in quantum mechanics does. In particular, all measurements in boxworld are separable~\cite{ShortBarrett} (i.e., can be expressed as a sum of product effects), which leads to various restrictions. For instance, teleportation and entanglement swapping are impossible in boxworld~\cite{barrett,ShortBarrett,SPG}. This directly implies that entanglement swapping is also not possible in the multi-partite setting (i.e., with non-wirings). A direct proof of this, which also applies to other GPTs, is obtained by following the same lines of reasoning as Lemma~2 of~\cite{WC5}.

\subsection{Boxworld effects in other GPTs} \label{sec:other_gpts}

Wirings correspond to a classical processing of inputs and outcomes from measurements and as such the inputs and outcomes of \emph{any} GPT can be connected by wirings. We show here that from the effects derived in this work ---  including non-wiring effects --- we can indeed construct valid effects for any other GPT.

Consider a single system that is fully characterised by $\ins$ inputs and $\outs$ outcomes and call the SBV effects $e_{1,1},\ldots,e_{\ins,\outs}$, where $e_{i,j}$ is the effect that has an entry $1$ for the $i^{\text{th}}$ input and $j^{\text{th}}$ outcome and is zero otherwise. For $\parties$ parties, the effects $\{e_{i_1,j_1} \otimes \cdots \otimes e_{i_\parties,j_\parties}\}_{i_1,\ldots, i_\parties, j_1, \ldots, j_\parties}$ span the effect cone of all boxworld effects. Thus, the effects we derived above for $\parties$-parties can be written in terms of the local effects as 
$$e=\sum_{{i_1},\ldots, i_\parties =1}^{\ins} \sum_{{j_1},\ldots, j_\parties =1
}^{\outs} \lambda_{{i_1}, \ldots, {i_\parties},j_1, \ldots, j_\parties} e_{i_1,j_1} \otimes \cdots \otimes e_{i_\parties,j_\parties},$$
where $\lambda_{{i_1}, \ldots, {i_\parties},j_1, \ldots, j_\parties}$ are some coefficients (which can be taken to be positive cf.\ Lemma~\ref{lem:pos}).
Now, as we know that any valid effect is valid on any valid state, i.e., $0\leq\innerp{e}{s}\leq1$, we also have 
$$0\leq \sum_{{i_1},\ldots,i_\parties=1}^{\ins} \sum_{{j_1},\ldots, j_\parties=1}^{\outs} \lambda_{{i_1}, \ldots, {i_\parties},j_1, \ldots, j_\parties} P(j_1, \ldots, j_\parties|i_1, \ldots, i_\parties) \leq 1,$$
where $P(j_1, \ldots, j_\parties|i_1, \ldots, i_\parties)$ is \emph{any} no-signalling distribution. This holds because the state space in box-world is in 1:1 correspondence with the set of all no-signalling distributions of which the effects  $\{ e_{i_1,j_1} \otimes \cdots \otimes e_{i_\parties,j_\parties} \}_{i_1,\ldots, i_\parties, j_1, \ldots, j_\parties}$ essentially just pick out elements.

Performing local measurements on $\parties$ party states leads to non-signalling correlations in any GPT, since non-signalling is one of the underlying assumptions. This implies that the correlations arising from performing any local measurements on an $\parties$ party system, are mapped to a probability by the map defined by the coefficients $\{\lambda_{{i_1}, \ldots, {i_\parties},j_1, \ldots, j_\parties}\}_{i_1,\ldots, i_\parties, j_1, \ldots, j_\parties}$. 
This means that for any GPT we can build valid effects from the ones we derived for boxworld, namely using these same coefficients:
$$f=\sum_{{i_1},\ldots, i_\parties=1}^{\ins} \sum_{{j_1},\ldots, j_\parties =1
}^{\outs} \lambda_{{i_1}, \ldots, {i_\parties},j_1, \ldots, j_\parties} f_{i_1,j_1} \otimes \cdots \otimes f_{i_\parties,j_\parties},$$
where the $f_{i,j}$ are local effects in the GPT of interest.

Note that a special case of the above is that in any GPT in which SBVs are valid local effects, all the effects found by our algorithms are directly valid.  Furthermore, the state space of a GPT can always be expressed in terms of the outcome probabilities of a set of fiducial measurements. Having made the transformation needed for this representation, the SBVs are local effects.

\section{The significance of wiring and non-wiring operations beyond GPTs}
\label{sec:beyond_GPT}
The polytope spanned by all identity effects in boxworld in any $(\parties,\ins,\outs)$ scenario is of interest beyond the scope of boxworld and even beyond the scope of GPTs. This is illustrated with the following two connections to other areas of quantum foundations. This means that our Algorithm~\ref{alg:all} may be of more general interest in the future in the sense that it allows us to construct logically consistent classical processes without causal order as well as compositions of contextuality scenarios.

\subsection{Logically consistent classical processes} \label{sec:class_without_causal_order}

We also remark that the polytope of all identity effects in boxworld is equivalent to that of the logically consistent classical processes, which was computed in~\cite{BaumelerWolf}.\footnote{This includes not only the $\{0,1\}$-valued identity effects but the whole polytope.} Since extremal effects for a single system in boxworld are SBVs in the $(\ins\outs)^\parties$-dimensional representation (and coarse-grainings of these), operationally, any extremal local measurement can be performed by choosing an input and then potentially coarse-graining the outcome, which is a classical process.\footnote{Note that in the language of~\cite{BaumelerWolf} a process being classical does not necessarily mean that the theory considered is classical in the GPT sense, i.e., classical probability theory.} This explains the correspondence between performing multi-system measurements on boxworld systems and the ways these classical local operations can be consistently connected. This correspondence extends: for any number of inputs and outputs the logically consistent classical processes can be understood in terms of the (identity) measurements of a GPT, namely boxworld. This  operational way of thinking about the logically consistent operations may aid our understanding of such operations.

\subsection{Composing contextuality scenarios}

Contextuality is a notion that captures the lack of predetermined outcomes in quantum measurements, as these may depend on the context in which the measurements are performed. There are various approaches capturing this notion, starting from the original work of Kochen and Specker~\cite{KS}. 
In~\cite{Wolfe_Sainz}, the composition of contextuality scenarios was analysed, following the approach to contextuality from~\cite{Acin_Contextuality}. According to this approach, a contextuality scenario is represented by a hypergraph $(\verts,\edges)$, where the vertices $\vertx\in\verts$ represent events of a specific outcome occurring and hyperedges $\edge\in\edges$ correspond to collections of events that make up a measurement.\footnote{In graph theory, vertices are usually $V$ and edges $E$; the notation is different here because in a noncontextual model vertices correspond to effects and hyperedges to measurements.} These edges may overlap on some vertices, meaning that the respective event can occur as part of several measurements. In a non-contextual model, each vertex can be assigned a probability, since the event occurs with the same frequency no matter which of the measurements it is part of is performed (in a contextual model the probability will in general also depend on the measurement). The probabilistic models that are allowed then further depend on the underlying theory (e.g.\ classical, quantum or more general).

These contextuality scenarios can then also be considered in the multi-party regime, meaning that independent contextuality scenarios for several independent systems are turned into a single scenario for the joint system. This means that for two systems $A$, $B$ for each of which a contextualtiy scenario $(\verts_A,\edges_A)$, $(\verts_B, \edges_B)$ is given, a joint scenario $(\verts_{AB},\edges_{AB})$ with events $\vertx_{AB}=(\vertx_A,\vertx_B) \in\verts_{AB}$, $\vertx_A\in\verts_A$, $\vertx_B\in\verts_B$ is constructed in a way that the probabilistic models defined on the hypergraph satisfy the non-signalling principle. Depending on the set $\edges_{AB}$ that is constructed for this purpose, we speak about a different product. 
While in the case of two contextuality scenarios there is a unique way to compose them~\cite{Foulis_Randall}, namely by means of the Foulis-Randall (FR) product, in~\cite{Acin_Contextuality,Wolfe_Sainz} different ways to compose more than two such scenarios, all respecting the non-signalling principle, were proposed. 

Our multi-partite $\{0,1\}$-valued effects can be seen as ways to combine products of deterministic local effects into multipartite measurements for our (non-signalling) GPT systems, which indeed amounts to the same mathematical problem as composing contextuality scenarios. Local deterministic effects correspond to the events $v$ and the identity effects to the edges $e$ in these multiparty scenarios. 
Thus our Algorithm~\ref{alg:id} can be seen as a way to construct compositions of contextuality scenarios. Specifically, this algorithm constructs a product known as the \emph{disjunctive FR product} in~\cite{Wolfe_Sainz}. The subset of wiring effects identified by Algorithm~\ref{alg:wir} constructs the \emph{maximal FR product} from~\cite{Wolfe_Sainz}.

The existence of extremal identity effects that are not $\{0,1\}$-valued further show that there is a more general way to combine single systems effects compatibly with non-signalling. This reasoning could also be applied to the events in a contextuality scenario. This suggests that the hypergraph formalism for describing contextuality scenarios needs to be extended to include a new concept that generalises the notion of a hyperedge. This generalisation requires a weight for each element of the hyperedge. [This can be easily added within the matrix representation of the hypergraph which works as follows. Each column corresponds to a vertex and each row denotes a hyperedge with a 1 meaning the vertex of the corresponding column is in the hyperedge and a 0 meaning it is not. In the generalisation, the matrix is no longer restricted to contain only elements of $\{0,1\}$.]  It would be interesting to explore the significance of this for non-contextuality in more detail.

\section{Conclusion} \label{sec:conclusion}
Characterizing measurements in theories beyond quantum theory allows us to better explore the possibilities for information processing in such theories and, in turn helps us understand what is special about quantum theory itself.  In this paper we have explored ways to generate all the effects present in a maximally nonlocal alternative theory, boxworld.  We have been able to find all the deterministic effects in several scenarios, dividing them into wirings and non-wirings, and also found many classes of non-deterministic effect. Although we have focused on boxworld, theories with fewer states have fewer constraints on allowed effects (under the no-restriction hypothesis) and hence the effects we have found are applicable in a wide range of GPTs (see Section~\ref{sec:other_gpts}).\footnote{Removing states can also lead to non-separable effects~\cite{Skrzypczyk_2009}.}

The effects we have found are also relevant for studies of logically consistent classical processes and for compositions of contextuality scenarios, where our findings suggest the need to extend the hypergraph formalism for dealing with these.

We have applied our findings to several applications, demonstrating advantages of both deterministic and non-deterministic non-wirings. In particular, we showed that, contrary to previous claims, examples of nonlocality without entanglement also appear in boxworld. We remark here that the \emph{existence} of further examples of quantum non-locality without entanglement has recently been shown in the literature, using a construction based on classical processes without causal order~\cite{Ravi,Murao}. Due to the correspondences we establish in Sections~\ref{sec:class_without_causal_order} and~\ref{sec:other_gpts},
any such example can likely be turned directly into an example for other GPTs.


In quantum theory there is often focus on entangled measurements, but there are other types of joint measurement (see, e.g.,~\cite{BDFMRSSW}) whose power would be useful to explore. Since boxworld has no non-separable effects and is somewhat limited with respect to its reversible dynamics~\cite{GMCD}, it would be interesting to complement our investigations with those of possible measurements in no-signalling theories with restricted nonlocality.

\acknowledgements
MW would like to thank \"Amin Baumeler for discussions about logically consistent classical processes and Elie Wolfe for pointing out his related work~\cite{Wolfe_Sainz}. GE is supported by the EPSRC grant EP/LO15730/1. MW is supported by the Lise Meitner Fellowship of the Austrian Academy of Sciences (project number M 3109-N). Part of this work appeared in GE's thesis~\cite[Chapter 7]{GiorgosThesis}.


\end{document}